\pgfplotsset{compat=newest}
\definecolor{veryLightGrey}{HTML}{F2F2F2}
\definecolor{lightGrey}{HTML}{DDDDDD}
\definecolor{colorSimdRecSplit}{HTML}{444444}
\definecolor{colorChd}{HTML}{377EB8}
\definecolor{colorRustFmph}{HTML}{A65628}
\definecolor{colorRustFmphGo}{HTML}{A65628}
\definecolor{colorSicHash}{HTML}{4DAF4A}
\definecolor{colorPthash}{HTML}{984EA3}
\definecolor{colorDensePtHash}{HTML}{377EB8}
\definecolor{colorRecSplit}{HTML}{FF7F00}
\definecolor{colorBbhash}{HTML}{F781BF}
\definecolor{colorShockHash}{HTML}{F8BA01}
\definecolor{colorBipartiteShockHash}{HTML}{F781BF}
\definecolor{colorBipartiteShockHashFlat}{HTML}{984EA3}
\definecolor{colorBmz}{HTML}{000000}
\definecolor{colorBdz}{HTML}{444444}
\definecolor{colorFch}{HTML}{444444}
\definecolor{colorChm}{HTML}{A65628}
\definecolor{colorFiPS}{HTML}{FF7F00}
\definecolor{colorConsensus}{HTML}{4DAF4A}
\definecolor{colorPtrHash}{HTML}{F8BA01}
\definecolor{colorMorphisHash}{HTML}{A65628}
\definecolor{colorMorphisHashFlat}{HTML}{E41A1C}
\colorlet{colorBruteForce}{colorSicHash}
\colorlet{colorRotationFitting}{colorChd}
\definecolor{colorCuckoo}{HTML}{E41A1C}
\definecolor{veryLightGrey}{HTML}{DDDDDD}
\definecolor{greenToPurple1}{HTML}{F9BB01}
\definecolor{greenToPurple2}{HTML}{CC8417}
\definecolor{greenToPurple3}{HTML}{A36029}
\definecolor{greenToPurple4}{HTML}{7F4C37}
\definecolor{greenToPurple5}{HTML}{5F4440}
\definecolor{greenToPurple6}{HTML}{444444}
\pgfplotsset{
  mark repeat*/.style={
    scatter,
    scatter src=x,
    scatter/@pre marker code/.code={
      \pgfmathtruncatemacro\usemark{
        or(mod(\coordindex,#1)==0, (\coordindex==(\numcoords-1))
      }
      \ifnum\usemark=0
        \pgfplotsset{mark=none}
      \fi
    },
    scatter/@post marker code/.code={}
  },
  log x ticks with fixed point/.style={
      xticklabel={
        \pgfkeys{/pgf/fpu=true}
        \pgfmathparse{exp(\tick)}%
        \pgfmathprintnumber[fixed relative, precision=3]{\pgfmathresult}
        \pgfkeys{/pgf/fpu=false}
      }
  },
  log y ticks with fixed point/.style={
      yticklabel={
        \pgfkeys{/pgf/fpu=true}
        \pgfmathparse{exp(\tick)}%
        \pgfmathprintnumber[fixed relative, precision=3]{\pgfmathresult}
        \pgfkeys{/pgf/fpu=false}
      }
  },
  every axis/.style={scale only axis},
  major grid style={thin,dotted},
  minor grid style={thin,dotted},
  ymajorgrids,
  yminorgrids,
  every axis/.append style={
    line width=0.9pt,
    tick style={
      line cap=round,
      thin,
      major tick length=4pt,
      minor tick length=2pt,
    },
    mark options={solid},
  },
  legend cell align=left,
  legend style={
    line width=0.7pt,
    /tikz/every even column/.append style={column sep=2mm,black},
    /tikz/every odd column/.append style={black},
    mark options={solid},
    font=\small,
  },
  title style={yshift=-2pt},
  enlarge x limits=0.04,
  scale only axis,
  /pgf/number format/1000 sep={},
  axis lines*=left,
  xlabel near ticks,
  ylabel near ticks,
  axis lines*=left,
  label style={font=\footnotesize},
  every axis y label/.append style={yshift=-1pt,inner sep=0,outer sep=0},
  tick label style={font=\footnotesize},
  cycle list name=myColorList,
  plotEvalPareto/.style={
    width=55mm,
    height=40mm,
    xmax=2,
  },
}
\def\consensus{\texorpdfstring{C\scalebox{0.8}{ONSENSUS}}{CONSENSUS}\xspace}
\newcommand{\mytitlerunning}{MorphisHash: Improving Space Efficiency of ShockHash for Minimal Perfect Hashing}
\title{\texorpdfstring{\mytitlerunning}{\mytitlerunning}}
\titlerunning{\mytitlerunning}
\author{Stefan Hermann}{Karlsruhe Institute of Technology, Germany}{hermann@kit.edu}{https://orcid.org/0000-0001-9183-2926}{}
\newcommand{\myauthorrunning}{S. Hermann}
\authorrunning{\myauthorrunning}
\DeclareMathOperator{\defe}{def}
\DeclareMathOperator{\rank}{rank}
\keywords{compressed data structure, perfect hashing, random graph, pseudoforest, component} %
\newcommand{\myparagraph}[1]{\subparagraph{#1\@}}
\newcommand{\fvref}[1]{in \cref{#1}}
\begin{document}
\maketitle

\def\speedup{21}

\begin{abstract}
A minimal perfect hash function (MPHF) maps a set of $n$ keys to unique positions $\{1, \ldots, n\}$.
Representing an MPHF requires at least $\log_2(e)\approx 1.443$ bits per key.
ShockHash is a technique to construct an MPHF and requires just slightly more space.
It gives each key two random candidate positions.
If each key can be mapped to one of its two candidate positions such that there is exactly one key mapped to each position, then an MPHF is found.
If not, ShockHash repeats the process with a new set of random candidate positions.
ShockHash has to store how many repetitions were required and for each key to which of the two candidate positions it is mapped.
However, when a given set of candidate positions can be used as MPHF then there is not only one but multiple ways of mapping the keys to one of their candidate positions such that the mapping results in an MPHF.
This redundancy makes up for the majority of the remaining space overhead in ShockHash.
In this paper, we present MorphisHash which almost completely eliminates this redundancy.
Our theoretical result is that MorphisHash saves $\Theta(\ln(n))$ bits in expectation compared to ShockHash.
This corresponds to a factor of 20 less space overhead in practice.
Just like ShockHash, MorphisHash can be used as a building block within RecSplit to obtain MorphisHash-RS.
When compared for same space consumption, MorphisHash-RS can be constructed up to \speedup{} times faster than ShockHash-RS.
The technique to accomplish this might be of a more general interest to compress data structures.

\end{abstract}

\newpage
\section{Introduction}
Given a set $S$ of $n$ keys, a minimal perfect hash function (MPHF) maps each key to a unique position in $[n]:=\{1, \ldots, n\}$.
MPHFs have a wide range of applications including compressed full-text indexes \cite{belazzougui2014alphabet}, computer networks \cite{lu2006perfect}, databases \cite{chang2005perfect}, prefix-search data structures \cite{belazzougui2010fast}, language models \cite{pibiri2017efficient}, bioinformatics \cite{crawford2018practical,pibiri2022sparse}, and Bloom filters \cite{broder2004network}.
Different techniques exist for constructing an MPHF.
They offer a variety of trade-offs between construction time, space consumption and query time.
The space lower bound of an MPHF is $\log_2(e)\approx 1.443$ bits per key \cite{mehlhorn1982program}.

\myparagraph{ShockHash.}
Our technique builds on ShockHash \cite{lehmann2023bipartite, lehmann2023shockhash}.
Similar to Cuckoo hashing \cite{pagh2004cuckoo}, each key is given two candidate positions using respective hash functions $h_{s,0}: S \rightarrow [n]$ and $h_{s,1}: S \rightarrow [n]$ with seed $s$.
ShockHash finds a seed $s$ such that all keys can be mapped to one of their candidate positions and there is exactly one key mapped to each position.
ShockHash needs to store the seed $s$, once found.
Additionally, it needs to store for each key $k\in S$ if the candidate position $h_{s,0}(k)$ or $h_{s,1}(k)$ is used.
This can be represented using a function $f: S \rightarrow \{0, 1\}$.
Such a mapping can be stored efficiently using a retrieval data structure \cite{dillinger2021fast} which requires about 1 bit per key.
A key $k \in S$ is queried using $h_{s,f(k)}(k)$.

A different perspective on ShockHash is that with each seed $s$ it samples a random graph.
The $[n]$ possible output positions are the nodes of that graph.
The keys are the edges, connecting the nodes of their respective candidate positions.
A seed is accepted if the graph can be oriented, i.e. each edge is given a direction, such that the indegree of each node is $1$.
This is possible if and only if the graph is a \emph{pseudoforest} – a graph where each component is a cycle with trees branching from it.
The edges of the cycle of each component are oriented either all `clockwise' or `counterclockwise'.
The edges in the trees are uniquely oriented away from their cycle.
Hence, the indegree of each node is 1.
ShockHash arbitrarily chooses one of two possible orientations of each cycle and stores the according orientation of each edge in a retrieval structure.
Hence, for each cycle there is one bit of redundancy.

\myparagraph{Contribution.}
In this paper we introduce MorphisHash which exhausts this remaining redundancy.
MorphisHash is a recursive acronym: \textbf{M}orphisHash is an \textbf{o}verloaded \textbf{r}etrieval structure for \textbf{p}erfect \textbf{h}ash\textbf{i}ng using \textbf{S}hock\textbf{Hash}.
Our key observation is that the possible orientations of a pseudoforest can be described as the solution of a linear equation system.
A retrieval structure that stores the edge orientations can also be described as the solution of a linear equation system.
This allows us to concatenate the equation systems using matrix multiplication.
We achieve compression by reducing the dimensionality of the solution space of the combined equation system.
Our theoretical insight is that a random pseudoforest has $\Theta(\ln(n))$ components in expectation and MorphisHash can convert this into $\Theta(\ln(n))$ bits of expected space savings compared to ShockHash by utilizing the freedom of choosing the orientation of each component's cycle.
Our experiments show that MorphisHash has about a factor of $20$ less space overhead than ShockHash at the cost of roughly $4$ times more construction time.

\myparagraph{Partitioning.}
Note that within this paper, $n$ denotes the input size for one instance of MorphisHash.
In \cref{s:partitioning}, an MPHF is obtained by splitting a large input key set into a linear number of MorphisHash instances, each of size $n$, and concatenating them afterwards.
Partitioning is required mainly for keeping construction times feasible.
Furthermore, our per instance space savings translate to linear space savings in terms of the large input.

\myparagraph{Outline.}
We begin in \cref{s:rl} with related space efficient PHF construction techniques.
We present MorphisHash in \cref{s:mh} and analyze it in \cref{s:analysis}.
We explain implementation details in \cref{s:partitioning}.
Finally, \cref{s:experiments} discusses experiments and the paper is concluded with \cref{s:conclusion}.
Our compression technique might be of a more general interest and we give further examples \fvref{s:appl}.

\section{Related Work}
\label{s:rl}
We provide a brief overview of other space efficient PHF techniques.
For a detailed survey of state-of-the-art techniques we refer to \cite{lehmann2025modern}.

\myparagraph{RecSplit}
RecSplit \cite{esposito2020recsplit, bez2022high} first hashes the keys into partitions of about 2000 keys.
RecSplit then finds a seed of a hash function that splits the partition into smaller subsets of equal size.
This is applied recursively resulting in a tree-like structure.
Once sufficiently small subsets are obtained, RecSplit uses brute-force search to find an MPHF within each leaf.
Very recently a significant improvement to RecSplit has been made with the introduction of \consensus\cite{lehmann2025combined}.
Instead of allowing arbitrarily large seeds, \consensus-RecSplit uses a fixed number of bits for each seed and backtracks in the splitting tree if a seed cannot be found within the allowed space.
\consensus-RecSplit is currently the most space efficient technique with just 0.001 bits per key overhead.

\myparagraph{PHOBIC}
Another PHF construction technique is PHOBIC \cite{PHOBIC}.
Again, the keys are hashed into partitions of about 2000 keys.
Within each partition the keys are hashed to buckets which have an average size of about 10.
For each bucket, PHOBIC uses brute force search to find a seed of a hash function such that all keys of that bucket are hashed to positions in $[n]$ to which no keys of previous buckets are hashed to.
The buckets are inserted in non-increasing order of size because it is much easier to insert the larger buckets when the output domain is almost empty.
This effect is utilized further by intentionally making some of the buckets larger.
PHOBIC has fast queries at the cost of more space overhead.

\section{MorphisHash}
\label{s:mh}
ShockHash samples random graphs until stumbling on a pseudoforest.
The only remaining degree of freedom when orienting the pseudoforest is that each component contains a cycle and there are two ways to orient each cycle.
We address this remaining redundancy with MorphisHash.

The first ingredient of MorphisHash is the insight that all allowed orientations of a graph can be expressed as an affine subspace of $\mathbb{F}_2^S$, where $S$ is the set of $n$ keys.
To show this, we define $y \in \mathbb{F}_2^S$ as the vector representing the orientation of each edge such that an edge $j \in S$ is oriented to node $h_{s,y_j}(j)$.
We now consider a given pseudoforest and one possible orientation $y$.
We can flip the orientation of a cycle by adding a vector $v \in \mathbb{F}_2^S$ to y where $v_j=1$ if and only if $j$ is part of that cycle.
This can be done for each cycle independently.
Clearly, the dimension of this subspace is equal to the number of components.
Part of this section is to describe the linear equation system of which the solution space is our desired affine subspace.

The second ingredient is a 1-bit retrieval data structure.
The retrieval structure works by storing a bit vector $x \in \mathbb{F}_2^b$, where the parameter $b \in \mathbb{N}_0$ is discussed in detail later.
The orientation of an edge $j\in S$ is described using $h'_s(j) x$, where $h'_s: S \rightarrow \mathbb{F}_2^{1 \times b}$ is a hash function and $s$ is the ShockHash seed.
Using linear equations that involve hash functions is a common technique for retrieval data structures \cite{dillinger2021fast}.

The beauty of MorphisHash is that we can concatenate both linear equation systems using a simple matrix multiplication to find a retrieval structure which directly orients the edges correctly.
We can then decrease the number of bits $b$ that the retrieval structure is allowed to use which reduces the dimension of the solution space and therefore extracts the remaining redundancy.

We now show the equation system that describes the allowed edge orientations.
As a first step we show that the constraints of an MPHF can be weakened in the following sense:

\begin{lemma}
	\label{l:unevenmphf}
	A function $f: S \rightarrow [n]$ with $|S| = n$ is an MPHF (i.e. a bijection) if and only if for all $i \in [n]$ we have that $|\{f^{-1}(i)\}|$ is uneven.
\end{lemma}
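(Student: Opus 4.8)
The plan is to prove both implications directly, the forward one being immediate and the backward one a one-line counting argument. For the ``only if'' direction, if $f$ is a bijection then every preimage $f^{-1}(i)$ with $i \in [n]$ is a singleton, so $|f^{-1}(i)| = 1$, which is uneven; nothing further is needed.

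For the ``if'' direction I would start from the fact that the preimages $f^{-1}(1), \dots, f^{-1}(n)$ form a partition of $S$, so $\sum_{i=1}^{n} |f^{-1}(i)| = |S| = n$. The point is that an uneven integer is in particular positive, hence $|f^{-1}(i)| \ge 1$ for every $i \in [n]$. We thus have $n$ summands, each at least $1$, summing to exactly $n$, which forces $|f^{-1}(i)| = 1$ for all $i$. Therefore $f$ is surjective (no fiber is empty) and injective (no fiber has more than one element), i.e.\ a bijection.

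I do not expect a genuine obstacle here; the only step that matters is the innocuous observation that ``uneven'' excludes the value $0$, so that no fiber is empty, and this is exactly what makes the count close. An equally short alternative would be the contrapositive — a non-bijective map between two $n$-element sets fails to be surjective and hence has an empty, i.e.\ even-sized, fiber — but I would present the direct summation argument above since it does not even rely on the equivalence of injectivity and surjectivity for maps between equinumerous finite sets.
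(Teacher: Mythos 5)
Your proof is correct. It differs mildly from the paper's: the paper argues the backward direction by contrapositive, observing that if $f$ is not bijective then (since $|S| = n$ equals the size of the codomain) some fiber must be empty, and $0$ is even --- an argument that implicitly leans on the equivalence of injectivity and surjectivity for maps between finite sets of equal cardinality. You instead give a direct counting argument: oddness forces every fiber to have size at least $1$, and since the $n$ fiber sizes sum to $|S| = n$, each must equal exactly $1$. Both proofs are one-liners; yours makes the finite-cardinality bookkeeping explicit and avoids the pigeonhole-style shortcut, while the paper's is marginally terser. Either is perfectly adequate for the role the lemma plays (justifying the parity test $A_i y + d_i + 1 = 1$ over $\mathbb{F}_2$), and you correctly identified the one point of substance, namely that ``uneven'' rules out empty fibers.
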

\begin{proof}
	Clearly, if $f$ is bijective then $|\{f^{-1}(i)\}| = 1$ is uneven.
	If $f$ is not bijective then there is at least one $i$ such that $|\{f^{-1}(i)\}| = 0$ which is even.
\end{proof}

\myparagraph{Linear Equations in Graphs.}
This allows us to count the indegree of each node using $\mathbb{F}_2$: If the indegree of all nodes is $1_{\mathbb{F}_2}$ then the orientations result in a valid MPHF.
Recall the definition of $y \in \mathbb{F}_2^S$ as the vector representing the orientation of each edge such that an edge $j \in S$ is oriented to node $h_{s,y_j}(j)$.
A different perspective is that an edge $j$ is oriented towards a node $i$ if and only if $(h_{s,1}(j) = i \land y_j = 1) \lor (h_{s,0}(j) = i \land  y_j + 1 = 1)$, which will be useful in the following equation.
We define $A\in \mathbb{F}_2^{n \times S}$ as the incidence matrix of the graph.
We also define $d \in \mathbb{F}_2^n$ where $d_i = |\{j~|~h_{s,0}(j) = i\}| + 1$ counts the number of edges (+1) that are mapped to position $i$ using the $h_{s,0}$ candidate function.
Finally, this allows us to count the indegree of a node $i$ as 
\begin{align*}
	\sum_{h_{s,1}(j) = i} y_j + \sum_{h_{s,0}(j) = i}( 1 + y_j) = \big(\sum_{\substack{h_{s,0}(j) = i \\ \oplus h_{s,1}(j) = i }} y_j \big) + d_i + 1 = A_i y + d_i + 1
\end{align*}

The $\oplus$ operator is logical XOR.
According to \cref{l:unevenmphf} setting $A_i y + d_i + 1 = 1$ such that the indegree of each node $i$ is uneven results in a valid orientation of all edges.
The complete equation system therefore simplifies to $Ay=d$, which has a solution if and only if the graph is a pseudoforest.
Note that in case of a loop, the respective column in the incidence matrix is a zero column.

\begin{figure}[t]
	\centering
	\includegraphics[width=\linewidth]{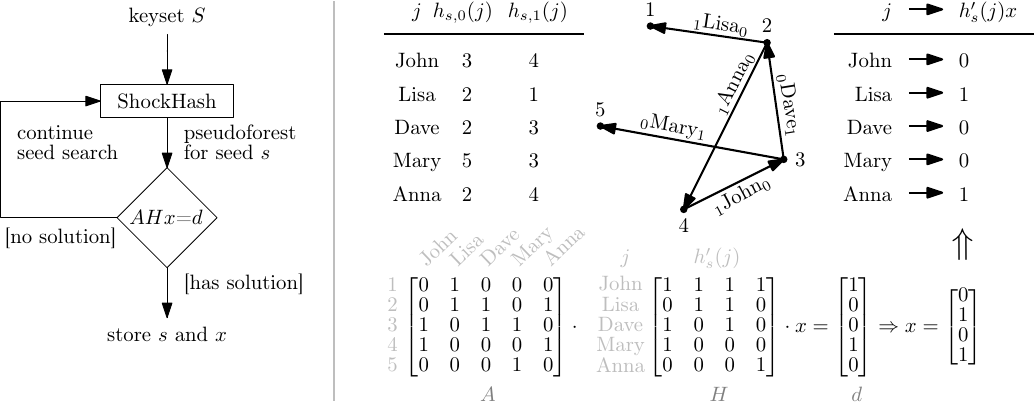}
	\caption{Left: MorphisHash uses ShockHash as a black box.
		Right: An example.
		ShockHash has found a pseudoforest using the candidate positions $h_{s,0}$ and $h_{s,1}$ as shown in the table and also illustrated as a graph.
		The description $_1\text{Lisa}_0$ indicates that $h_{s,1}$ is the left candidate and $h_{s,0}$ the right.
		This can be used to find $d$.
		For example, $d_3 = 0^{\text{Dave}} + 0^{\text{Mary}} + 1^{\text{John}} + 1^{\text{offset}}  = 0$.
		The incidence matrix $A$ and some random matrix $H$ is shown.
		A solution $x$ is found resulting in the orientations as shown both in the right table and using arrows in the graph. 
		The solution $x$ describes the orientation of 5 edges but only requires 4 bits to store.
		The figure is derived from \cite[Fig. 1]{lehmann2023shockhash}.	
}
	\label{fig:mhex}
\end{figure}

\myparagraph{The Retrieval Data Structure.}
To store the orientation of the edges we cannot use $y$ directly because we do not know the index of a key during query time.
We therefore employ the idea of a retrieval structure.
Our retrieval structure consists of a bit vector $x \in \mathbb{F}_2^b$, where $b$ is a tuning parameter.
The orientation of edge $j$ is then described using a scalar product $y_j = h'_s(j) x$ where $h'_s: S \rightarrow \mathbb{F}_2^{1 \times b}$ is a hash function and $s$ is the ShockHash seed.
MorphisHash needs to find $x$ such that all edges are properly oriented.
The above linear equation is given for each key and can therefore be written using a matrix $H$.
Each row $H_j$ is the hash $h'_s(j)$ of the respective key $j$.
We have $y=Hx$ and substitute it into $Ay=d$ resulting in $AHx=d$.
If a solution $x$ for $AHx=d$ exists then the graph is a pseudoforest, $y=Hx$ is a valid orientation and $x$ requires exactly $b$ bits to store using a bit string.
We discuss the selection of $b$ both in theory (\cref{s:analysis}) and practice (\cref{s:experiments}).
Querying our MPHF for a key $j$ is now straightforward $h_{s, h'_s(j) x}(j)$.

If the system $AHx=d$ does not have a solution for a sampled graph this can have two reasons.
(1) The graph is no pseudoforest 
(2) A possible orientation of the pseudoforest is not within the solution space of the retrieval structure.
If there is no solution we reject the seed and ShockHash continues with searching for a new seed.

The already existing step in ShockHash of checking whether a seed is a pseudoforest is therefore redundant.
However, solving an equation system is computationally more expensive than using the original ShockHash pseudoforest check.
We therefore leave the original pseudoforest check as a filter and only need to solve $AHx=d$ a few times in practice.
MorphisHash is illustrated using an example in \cref{fig:mhex}.

\myparagraph{Bipartite MorphisHash.}
A variant of ShockHash is bipartite ShockHash.
Assume that $n$ is even, the extension for uneven numbers can be found in the original paper \cite{lehmann2023bipartite}.
In bipartite ShockHash, the ranges of the two hash functions are made disjoint using $h_{s_0, 0}: S \rightarrow \{1, \ldots, \frac{n}{2}\}$ and $h_{s_1, 1}: S \rightarrow \{\frac{n}{2} + 1, \ldots, n\}$.
The seed is also split in two independent parts $s_0$ and $s_1$.
Seeds where not all positions are hit by at least one candidate position are filtered out.
Bipartite ShockHash only checks if pairs of $s_0$ and $s_1$ that passed the filter result in a pseudoforest.
MorphisHash uses ShockHash as a black box and can be applied on the bipartite case just as well as on the non-bipartite case.
In an obvious manner, we refer to bipartite MorphisHash if bipartite ShockHash is used.

\section{Analysis}
\label{s:analysis}
In this section we analyze non-bipartite MorphisHash for large $n$.
Our analysis is split in two parts.
First, we analyze the number of components of a random pseudoforest.
We use this in the second part, to show the space efficiency of MorphisHash compared to ShockHash.
Note that we employ the common simple uniform hashing assumption \cite{pagh2008uniform}, which assumes that hash functions behave like truly random functions. 

\subsection{The Number of Components in a Random Pseudoforest}
In a pseudoforest each component is a tree with one additional edge.
An equivalent view is that each component of the pseudoforest is a cycle with trees branching from it.

Our first result uses the following graph model.
Let $R=[n] ^ {[n]}$ be the set of all functions from $[n]$ to $[n]$.
Every $r \in R$ corresponds to a directed graph $G_r = ([n], \{(v, r(v)) ~|~ v \in [n]\})$.
All $G_r$ are pseudoforests, because if we start from any node and follow its edges we will eventually end up in a cycle.
Furthermore, there can only be one cycle in each component because each node has an out-degree of one.
In the following we refer to all $G_r$ as \emph{maximal directed pseudoforests}.
All edges of the pseudoforest which are not part of the cycle are pointed towards the cycle and all edges in the cycle are all directed in the same direction.
We will first need the following results.

\begin{lemma}[\cite{takacs1990cayley}]
	\label{l:roottrees}
	Let $T(n, i)$ be the number of undirected forests having node set $[n]$ with $i$ components where nodes $1, 2, \ldots, i$ belong to different trees.
	We have $T(n, i) = in^{n-i-1}$.
\end{lemma}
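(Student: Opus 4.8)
The plan is to prove Lemma~\ref{l:roottrees} via a clean double-counting argument using Cayley-type enumeration, following the generating-function-free combinatorial route.

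First I would recall the basic building block: the number of labelled rooted forests on node set $[n]$ consisting of exactly $i$ rooted trees, where the $i$ roots are prescribed to be particular nodes, is $i\,n^{n-i-1}$; equivalently, the number of spanning forests of $K_n$ with $i$ trees such that a fixed set of $i$ nodes lies in distinct trees equals $i\,n^{n-i-1}$. This is exactly the statement we want, so the real content is to derive it from something more elementary. The cleanest approach is the one due to the ``double counting via adding a super-root'' trick: add a new node $0$ to the graph, so we have node set $\{0,1,\ldots,n\}$ of size $n+1$. A spanning tree of $K_{n+1}$ in which node $0$ has degree exactly $i$ corresponds bijectively to a forest of $i$ rooted trees on $[n]$ (the components attached to $0$), with the $i$ neighbours of $0$ being the roots. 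By Cayley's formula there are $(n+1)^{n-1}$ spanning trees of $K_{n+1}$ in total, and one counts those where $\deg(0)=i$ by choosing which $i$ of the $n$ nodes are neighbours of $0$; but to get the version where the roots are a \emph{fixed} set $\{1,\ldots,i\}$ rather than an arbitrary $i$-subset, I would instead use the refined identity that the number of rooted forests on $[n]$ with a prescribed root set $U$ of size $i$ depends only on $|U|$, and then extract it by a symmetry/averaging argument over all $\binom{n}{i}$ choices of $U$ combined with the known total count of rooted forests with $i$ components (which is $\binom{n}{i} i\, n^{n-i-1}$, itself obtainable from Cayley).

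Alternatively — and this is probably the slicker write-up — I would cite the generalized Cayley formula directly: the number of forests on $[n]$ with components $T_1,\ldots,T_i$ such that node $k$ lies in $T_k$ is the coefficient extraction in $(x_1+\cdots+x_n)^{n-i-1}\cdot(\text{something})$, but since the excerpt already attributes this to \cite{takacs1990cayley}, the honest ``proof'' here is likely just: this is a classical result; we restate the short self-contained argument. So the key steps in order are: (1) set up the correspondence between the $T(n,i)$ forests and spanning trees of $K_{n+1}$ with $\deg(0)=i$ and with node $0$'s neighbours forced to be the roots of the $i$ trees; (2) observe that fixing nodes $1,\ldots,i$ to be in different trees is the same, after adding node $0$, as requiring the edges from $0$ to go to representatives of the $i$ components — and then count by a Prüfer-code argument where the sequence has length $(n+1)-2 = n-1$, node $0$ appears exactly $i-1$ times in the Prüfer code (since its degree is $i$), giving $\binom{n-1}{i-1}$ placements times $n^{(n-1)-(i-1)} = n^{n-i}$ ways to fill the rest — wait, this overcounts by not fixing the roots, so (3) divide out the choices: the $i$ roots among the components must be exactly the prescribed nodes, and a short argument shows the fraction of configurations respecting this is $i!\,/\,(\text{number of ways}) $, ultimately collapsing to the factor giving $i\,n^{n-i-1}$.

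The main obstacle I expect is getting the bookkeeping in the Prüfer-code count exactly right so that the ``prescribed roots'' constraint is handled cleanly rather than by an opaque division. The subtlety is that the Prüfer sequence of the tree on $\{0,\ldots,n\}$ does not directly see which node of each component is the root (the neighbour of $0$); that information lives in the positions of the $i-1$ occurrences of $0$ in the sequence together with the last entry, and one must check that requiring the prescribed nodes $1,\ldots,i$ to be the roots is equivalent to a condition counted by $\binom{n-1}{i-1}$ in the right way. Once that is pinned down, the arithmetic $\binom{n-1}{i-1}\cdot(\text{adjusted count}) = i\,n^{n-i-1}$ is routine. Given that the paper only needs the formula as a citationable ingredient, I would keep the proof to the essential bijective setup plus the Prüfer count and relegate the arithmetic verification to one line.
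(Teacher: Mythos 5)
The paper gives no proof of this lemma at all --- it is imported verbatim from the cited reference \cite{takacs1990cayley} --- so there is no internal argument to compare against; your sketch has to stand on its own. Of the two routes you describe, the first one (super-root plus symmetry) is sound and can be made fully rigorous in a few lines: trees on $\{0,1,\ldots,n\}$ in which node $0$ has degree $i$ are counted by the Pr\"ufer code as $\binom{n-1}{i-1}n^{n-i}$, these are in bijection with forests on $[n]$ having $i$ components together with a choice of one root per component (the neighbours of $0$), and since by relabelling the number of such forests with a \emph{prescribed} root set depends only on its size, $T(n,i)=\binom{n-1}{i-1}n^{n-i}\big/\binom{n}{i}=i\,n^{n-i-1}$. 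Note that stating the total as ``$\binom{n}{i}\,i\,n^{n-i-1}$, obtainable from Cayley'' is the one place where your write-up flirts with circularity; state it in the form $\binom{n-1}{i-1}n^{n-i}$ coming directly from the Pr\"ufer count and the circularity disappears. Your second, ``slicker'' route is the one with a genuine gap: the requirement that the neighbours of $0$ be exactly the prescribed nodes $1,\ldots,i$ is \emph{not} a condition you can read off from the positions of the $i-1$ occurrences of $0$ in the Pr\"ufer sequence (the identity of $0$'s neighbours is spread over the whole decoding process), and the concluding ``divide out the choices \ldots ultimately collapsing to the factor'' is asserted rather than argued; as written it is not a proof. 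If you want a direct argument without any division, the cleanest is the generalized Pr\"ufer code for forests rooted at $1,\ldots,i$: repeatedly delete the smallest non-root leaf and record its neighbour, which sets up a bijection with sequences $(a_1,\ldots,a_{n-i})$ where $a_1,\ldots,a_{n-i-1}\in[n]$ are arbitrary and $a_{n-i}\in\{1,\ldots,i\}$, giving $i\,n^{n-i-1}$ immediately and handling the prescribed-roots constraint by construction rather than by bookkeeping.
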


\begin{lemma}
	\label{l:cycles}
	$|\{ r \in R \ ~|~ \text{all components in}~G_r~ \text{are cycles}\}| = n!$
\end{lemma}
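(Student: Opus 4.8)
The statement counts functions $r \in R = [n]^{[n]}$ whose functional graph $G_r$ consists entirely of cycles, i.e.\ has no tree edges hanging off. Such a function is exactly a \emph{permutation} of $[n]$: every node has out-degree one by construction, and the condition that every component is a pure cycle forces every node to also have in-degree one, which is precisely the condition that $r$ be a bijection. Conversely, every permutation decomposes into disjoint cycles, so its functional graph is a union of cycles. Hence the set in question is in bijection with the symmetric group $\mathfrak{S}_n$, giving cardinality $n!$.

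The plan is therefore: (1) argue the forward inclusion — if all components of $G_r$ are cycles then $r$ is injective, hence a permutation. This follows because in $G_r$ the in-degree of a node $i$ equals $|r^{-1}(i)|$; a node lying on a cycle has in-degree at least $1$, and if \emph{every} node lies on a cycle then summing in-degrees gives $\sum_i |r^{-1}(i)| = n$ with each term $\ge 1$, forcing each term to equal exactly $1$, i.e.\ $r$ is a bijection. (2) Argue the reverse inclusion — if $r$ is a permutation, its cycle decomposition shows $G_r$ is a disjoint union of directed cycles, so it has no non-cycle edges. (3) Conclude that $r \mapsto r$ is a bijection between $\{r \in R : \text{all components of } G_r \text{ are cycles}\}$ and the set of permutations of $[n]$, which has size $n!$.

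An alternative, slightly more self-contained route avoids even naming permutations: observe that a node $v$ lies on a cycle of $G_r$ iff $v = r^{(k)}(v)$ for some $k \ge 1$, and the hypothesis says this holds for all $v$. For $v$ with minimal period $k$, the orbit $\{v, r(v), \ldots, r^{(k-1)}(v)\}$ is a cycle on which $r$ acts as a cyclic shift, so $r$ restricted to it is a bijection onto it; these orbits partition $[n]$, so $r$ is a bijection globally. Either way the counting reduces to $|\mathfrak{S}_n| = n!$.

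I do not expect a genuine obstacle here; the only thing to be careful about is making the equivalence ``all components are cycles $\iff$ $r$ is a bijection'' airtight, in particular handling the degenerate case of a fixed point $r(v) = v$ (a loop, which is a cycle of length one — consistent with the statement and with the earlier remark that a loop gives a zero column in the incidence matrix). This lemma is presumably a stepping stone toward counting pseudoforests by $i$: combining \cref{l:roottrees} (forests with prescribed roots in distinct trees) with \cref{l:cycles} (ways to close up the roots into cycles) should, via a gluing/composition argument, yield the number of maximal directed pseudoforests with exactly $i$ cyclic nodes, and hence the expected number of components $\Theta(\ln n)$.
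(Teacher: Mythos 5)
Your proposal is correct and follows essentially the same route as the paper: the paper's proof likewise observes that if every node lies on a cycle then every node has indegree $1$, so $r$ is a bijection, and counts the $n!$ bijections. You spell out the converse inclusion (every permutation's functional graph is a union of cycles) and the degenerate loop case more explicitly than the paper does, but the underlying argument is identical.
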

\begin{proof}
	If each node in $G_r$ is part of a cycle then the indegree of each node is 1.
	Hence, $r$ is a bijection and there are $n!$ possible bijections.
\end{proof}
We can now combine the previous results to analyze how the number of nodes in cycles is distributed in maximal directed pseudoforests.
\begin{lemma}
	\label{l:sub-graph-size}
	$|\{ r \in R \ ~|~ G_r~\text{is pseudoforest and has i nodes in cycles} \}| = \frac{i n^{n-1-i} n!}{(n-i)!}$.
\end{lemma}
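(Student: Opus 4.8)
The plan is to count, for each fixed $i$-element subset $C \subseteq [n]$, how many functions $r \in R$ have cyclic-node set exactly equal to $C$, and then multiply by $\binom{n}{i}$ to range over all choices of $C$; by symmetry of the vertex labels, this per-$C$ count does not depend on which $i$-subset is taken. Note first that every $G_r$ is automatically a pseudoforest, as already observed, so that clause in the statement imposes no restriction; the content is the distribution of the number $i$ of cyclic nodes.

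Fix $C$ with $|C| = i$. I would split $r$ into $r|_C$ and $r|_{[n]\setminus C}$ and observe that the cyclic-node set of $r$ equals $C$ exactly when (i) $r|_C$ is a bijection of $C$, and (ii) iterating $r$ from any $v \in [n]\setminus C$ eventually reaches $C$. By \cref{l:cycles} applied with $C$ in place of $[n]$ there are $i!$ choices for $r|_C$. For the rest, I claim the choices of $r|_{[n]\setminus C}$ satisfying (ii) are in bijection with the undirected forests on node set $[n]$ having $i$ components with the elements of $C$ in pairwise different trees: the partial functional digraph with arcs $(v,r(v))$ for $v\notin C$ is acyclic with the elements of $C$ as its only sinks, so a short edge count (it has $n$ vertices and $n-i$ edges, hence at least $i$ components, and at most $i$ since its sinks lie in distinct components) forces it to be a forest with $i$ components, one per element of $C$; conversely, orienting every edge of such a forest toward the element of $C$ in its tree recovers $r(v)$ for each $v\notin C$. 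By \cref{l:roottrees} (with label symmetry) the number of such forests is $T(n,i) = i\,n^{n-i-1}$.

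Multiplying the contributions gives
\begin{align*}
	|\{\, r \in R \mid G_r \text{ has } i \text{ nodes in cycles} \,\}| = \binom{n}{i}\, i!\, \cdot\, i\, n^{n-i-1} = \frac{n!}{(n-i)!}\, i\, n^{n-1-i},
\end{align*}
which is the claimed expression.

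I expect the only delicate point to be the bijection in the second paragraph — specifically, arguing cleanly that a parent assignment to $[n]\setminus C$ yields cyclic-node set exactly $C$ (neither a proper subset nor a proper superset) if and only if its undirected edge set is a forest with the $C$-nodes in distinct trees, so that the hypothesis of \cref{l:roottrees} ("nodes $1,\dots,i$ in different trees") matches "one cyclic node per component". The remaining steps are routine binomial-coefficient bookkeeping.
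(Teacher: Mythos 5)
Your proof is correct and follows essentially the same route as the paper: choose the $i$ cycle nodes ($\binom{n}{i}$ ways), count the permutations on them via \cref{l:cycles} ($i!$ ways), count the attached rooted forests via \cref{l:roottrees} ($i\,n^{n-i-1}$ ways), and multiply and simplify. You merely spell out the bijection between parent assignments on $[n]\setminus C$ and forests with the $C$-nodes in distinct trees, which the paper dispatches in one sentence (only note that the justification for ``at most $i$ components'' should be that every component contains a sink of the partial digraph, rather than that the sinks lie in distinct components).
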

\begin{proof}
	There are $\binom{n}{i}$ ways of partitioning the nodes into (1) cycles with a total of $i$ nodes and (2) trees that are attached to the cycles.
	According to \cref{l:cycles} there are $i!$ ways to create cycles with $i$ nodes.
	The $i$ nodes in the cycles are the roots of trees.
	According to \cref{l:roottrees} the number of labeled rooted trees with $n$ nodes and $i$ roots is $in^{n-i-1}$ .
	This number holds for graphs where each node has one outgoing edge, because all edges are uniquely oriented towards their parent node.
	We therefore have 
	\begin{align*}
		\binom{n}{i} i! in^{n-i-1} = \frac{n!}{i! (n-i)!} i!  in^{n-i-1} = \frac{i n^{n-1-i} n!}{(n-i)!}
	\end{align*}
	
\end{proof}
The previous result is based on maximal directed pseudoforests.
However, in MorphisHash we sample graphs in a different manner.
Let $Q = [n]^{S \times \{0,1\}}$ be the set of functions from $S \times \{0,1\}$ to $[n]$.
Each $q \in Q$ corresponds to a graph $G_q = ([n], \{\{q(x,0),q(x,1)\} ~|~ x \in S\})$.
We refer to the set of all $G_q$ as the \emph{hashed} graph model.
Graphs in this model may have multiple edges and loops.
MorphisHash uniformly samples functions $q\in Q$.
We are interested in the distribution of $G_q$ conditioned on the event that $G_q$ is a pseudoforest.
In the following, we transfer our result of maximal directed pseudoforests to the hashed graph model.

\begin{lemma}
	\label{l:sample-prob}
	Let $c(G)$ be the number of components of $G$, $C_n$ an appropriate normalization constant, $q \sim \mathcal{U}(Q)$ and $r \sim \mathcal{U}(R)$ then
	\begin{align*}
		p_n(i):=&\mathbb{P}[G_q~\text{has i nodes in cycles}~|~G_q~\text{is pseudoforest}]\\
		= &C_n \mathbb{E}[2^{-c(G_r)}~|~G_r~\text{has i nodes in cycles}] \frac{i}{n^i (n-i)!}
	\end{align*}
\end{lemma}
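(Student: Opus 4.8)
The plan is to reduce the statement to a pure counting identity. Since $q\sim\mathcal U(Q)$ is uniform, $p_n(i)=N(i)/\sum_{j=1}^{n}N(j)$, where $N(i):=|\{q\in Q : G_q\text{ is a pseudoforest with }i\text{ nodes in cycles}\}|$ and $i$ ranges over $\{1,\dots,n\}$ (a $G_q$ with $n$ edges on $n$ nodes that is a pseudoforest has every component unicyclic, hence at least one node in a cycle). By definition of conditional expectation for uniform $r\sim\mathcal U(R)$ together with \cref{l:sub-graph-size}, the sum $\sum_{r\in R,\ G_r\text{ has }i\text{ cycle nodes}}2^{-c(G_r)}$ equals $|\{r : i\text{ cycle nodes}\}|\cdot\mathbb E[2^{-c(G_r)}\mid i]=\tfrac{i\,n^{\,n-1-i}\,n!}{(n-i)!}\,\mathbb E[2^{-c(G_r)}\mid i]$. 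So it suffices to show $N(i)=K\cdot\sum_{r: i\text{ cycle nodes}}2^{-c(G_r)}$ for a constant $K$ not depending on $i$: then, using $n^{\,n-1-i}=n^{\,n-1}/n^{\,i}$ and cancelling the $i$-independent factor in $N(i)/\sum_j N(j)$, we obtain the claimed identity with $C_n=\big(\sum_{j=1}^{n}\mathbb E[2^{-c(G_r)}\mid j]\,\tfrac{j}{n^{\,j}(n-j)!}\big)^{-1}$.

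\textbf{Counting $N(i)$ via undirected skeletons.} Fix an undirected pseudoforest $G$ on $[n]$ with $i$ nodes in cycles, and let $L(G)$ be its number of loops and $D(G)$ its number of double edges (equivalently, its cycles of length $1$ and of length $2$). In a pseudoforest no node carries two loops and no pair of nodes carries three parallel edges, so loops and double edges are the only multiplicities, and the edge multiset of $G$ consists of $L(G)$ loops, $D(G)$ doubled pairs, and $n-L(G)-2D(G)$ simple edges. Counting $q\in Q$ with $G_q=G$: there are $n!/2^{D(G)}$ ways to assign the $n$ keys bijectively to the edge slots of $G$ with correct multiplicities (the $2$'s from the doubled pairs), and then, for each key, two choices of the ordering of the pair $(q(x,0),q(x,1))$ for a non-loop edge and one choice for a loop, a factor $2^{\,n-L(G)}$. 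Hence $|\{q : G_q=G\}|=n!\,2^{\,n-L(G)-D(G)}$, and summing over all such $G$ gives $N(i)=n!\,2^{\,n}\sum_{G}2^{-L(G)-D(G)}$.

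\textbf{Matching with the directed model.} The same weighted sum governs $R$: a fixed undirected pseudoforest $G$ is the skeleton of exactly $2^{\,c(G)-L(G)-D(G)}$ functions $r\in R$, because each tree edge is forced to point toward its cycle, each cycle of length $\ge 3$ has its two cyclic orientations, while a loop or a double edge pins $r$ down uniquely (for a double edge the two graph-orientations yield the same values of $r$). Since $c(G_r)=c(G)$ for every $r$ with skeleton $G$, we get $\sum_{r : i\text{ cycle nodes}}2^{-c(G_r)}=\sum_{G}2^{\,c(G)-L(G)-D(G)}\cdot 2^{-c(G)}=\sum_{G}2^{-L(G)-D(G)}$, i.e.\ exactly the sum appearing in $N(i)$. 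Therefore $K=n!\,2^{n}$, and combining with \cref{l:sub-graph-size} and cancelling $(n!)^2\,2^{\,n}\,n^{\,n-1}$ completes the argument.

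\textbf{Main obstacle.} The computation is elementary; the only subtle point is the multigraph bookkeeping. Loops and double edges each cause a factor-$2$ discrepancy between ``number of key-orderings/assignments'' and ``number of consistent orientations'', and the crux is that these discrepancies combine into the \emph{same} weight $2^{-L(G)-D(G)}$ on the $Q$-side and on the $R$-side, so they cancel and leave the clean relation $N(i)=n!\,2^n\sum_{r : i}2^{-c(G_r)}$. It is worth sanity-checking the extremes $i=n$ (permutations, where \cref{l:cycles} should reproduce $N(n)$) and $i=0$ (no pseudoforest, matching the vanishing factor $i$ and an empty conditioning event), and noting that the remark about loop columns of $A$ being zero is irrelevant here since we are counting graphs, not orientations.
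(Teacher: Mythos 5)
Your proof is correct and follows essentially the same route as the paper: both arguments rest on the two counting facts that a fixed undirected pseudoforest skeleton has $n!\,2^{\,n-D-L}$ preimages in $Q$ and $2^{\,c-D-L}$ functional orientations in $R$, combined with \cref{l:sub-graph-size} and normalization into $C_n$. The only difference is cosmetic bookkeeping — you sum the weight $2^{-L(G)-D(G)}$ over individual skeletons, while the paper groups graphs by their component/double-edge/loop counts $(k,d,l)$ and works with the ratios $a_{k,d,l},b_{k,d,l},c_{k,d,l}$.
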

\begin{proof}
	We begin with a relation between maximal directed pseudoforests $G_r$ and hashed pseudoforests $G_q$.
	A hashed pseudoforest $G_q$ is related to a maximal directed pseudoforest $G_r$ if we can obtain $G_r$ by orienting the edges of $G_q$.
	Let $a_{k,d,l}$ be the number of maximal directed pseudoforests and $b_{k,d,l}$ the number of hashed pseudoforest with $i$ nodes in cycles, $k$ components, $d$ double edges and $l$ loops.
	Within this subset, each cycle of the hashed graph can be oriented in two possible ways to obtain a maximal directed pseudoforest.
	However, changing the orientation of cycles that are a loop or double edge does not result in different maximal directed pseudoforests.
	Hence, we have $\frac{a_{k,d,l}}{b_{k,d,l}} = 2^{k-d-l}$ orientations which result in different maximal directed pseudoforest.
	
	Let $c_{k,d,l}$ be the number of elements $q \in Q$ such that $G_q$ is a pseudoforest with $i$ nodes in cycles, $k$ components, $d$ double edges and $l$ loops.
	We show that $\frac{c_{k,d,l}}{b_{k,d,l}} = n!2^{n-d-l}$.
	We have to consider the number of functions in $Q$ that result in the same hashed pseudoforest.
	The order of the edges can be permuted in $n!$ ways without changing the underlying pseudoforest, except for double edges.
	The number of possible permutations decreases by a factor of two for each double edge ($2^{-d}$).
	Analogously, the nodes within the $n$ edges can be switched without changing the underlying graph in $2^n$ ways, except for loops ($2^{-l}$).
	
	By definition we have $|Q_i| = \sum_{k,d,l} c_{k,d,l}$ and $|R_i| = \sum_{k,d,l} a_{k,d,l}$, where $Q_i \subset Q$ and $R_i \subset R$ are the pseudoforests with $i$ nodes in cycles.
	For brevity let \[p_{k,d,l} := \frac{a_{k,d,l}}{|R_i|} = \mathbb{P}[G_r~\text{has $k$ components, $d$ double edges and $l$ loops}~|~G_r~\text{has $i$ nodes cycles}]\]
	We have
	\begin{align*}
		p_n(i)&= \frac{|Q_i|}{\sum_j |Q_j|} =\frac{|R_i|}{\sum_j |Q_j|} \frac{\sum_{k,d,l} c_{k,d,l}}{|R_i|}
		=\frac{|R_i|}{\sum_j |Q_j|}\sum_{k,d,l} \frac{a_{k,d,l} c_{k,d,l}}{a_{k,d,l} |R_i|}\\
		&=\frac{|R_i|}{\sum_j |Q_j|}\sum_{k,d,l} p_{k,d,l} \frac{c_{k,d,l}}{a_{k,d,l}}
		=\frac{|R_i|}{\sum_j |Q_j|}\sum_{k,d,l} p_{k,d,l} \frac{c_{k,d,l}}{b_{k,d,l}}\frac{b_{k,d,l}}{a_{k,d,l}}\\
		&=\frac{|R_i|}{\sum_j |Q_j|}\sum_{k,d,l} p_{k,d,l}  n!2^{n-d-l}2^{-k+d+l}
		=\frac{|R_i|n!2^n}{\sum_j |Q_j|}\sum_{k,d,l} p_{k,d,l}  2^{-k}\\
		&=\frac{|R_i|n!2^n}{\sum_j |Q_j|}\mathbb{E}[2^{-c(G_r)}~|~G_r~\text{has $i$ nodes in cycles}]\\
		&=C_n \mathbb{E}[2^{-c(G_r)}~|~G_r~\text{has $i$ nodes in cycles}] \frac{i}{n^i (n-i)!}
	\end{align*}
	Where we used \cref{l:sub-graph-size} for $|R_i|$. $\sum_j |Q_j|$ is the number of all hashed pseudoforests and thus only depends on $n$. $C_n$ is chosen such that the probabilities add up to 1.
\end{proof}
The next step is to analyze $\mathbb{E}[2^{-c(G_r)}~|~G_r~\text{has $i$ nodes in cycles}]$.
To this end, we first require some definitions and more general results.

\begin{lemma}
	\label{l:inde}
The number of components of a pseudoforest with $n$ nodes and $i$ nodes in cycles follows the same distribution as the number of components of a graph of $i$ nodes where each component is a cycle.
\end{lemma}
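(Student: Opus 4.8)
The plan is to re-use the product structure already made explicit in the proof of \cref{l:sub-graph-size}. A maximal directed pseudoforest on $[n]$ with $i$ nodes in cycles decomposes into three independent pieces: (a) the set $U\subseteq[n]$ of the $i$ cycle-nodes; (b) the cyclic part, i.e.\ a directed graph on $U$ in which every node has in- and out-degree $1$, which is the same thing as a permutation of $U$; and (c) a forest on $[n]$ rooted exactly at the nodes of $U$, with every edge oriented toward its root. The counts $\binom{n}{i}$, $i!$ and $i\,n^{n-i-1}$ of these three choices multiply to the formula of \cref{l:sub-graph-size}. Hence, under $r\sim\mathcal U(R)$ conditioned on ``$G_r$ has $i$ nodes in cycles'', the three ingredients are independent, and in particular the cyclic part on $U$ is a uniformly random permutation of $U$, regardless of which $i$-set $U$ was selected.

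Next I would observe that adjoining the trees does not change the component count: each non-cycle edge points from a node to its parent, and iterating parents lands in $U$, so every weakly connected component of $G_r$ contains exactly one cycle and that cycle is contained in $U$. Therefore $c(G_r)$ equals the number of cycles of the permutation on $U$. By the previous paragraph that permutation is uniform on the $i!$ permutations of an $i$-element set, so $c(G_r)$, conditioned on ``$i$ nodes in cycles'', is distributed as the number of cycles of a uniformly random permutation of $[i]$.

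Finally I would identify this with the target: by \cref{l:cycles} applied with $i$ in place of $n$, the graphs on $i$ nodes all of whose components are cycles are exactly the permutations of the $i$ nodes, the uniform distribution on one corresponding to the uniform distribution on the other, and the number of components of such a graph is precisely the number of cycles of the permutation. Combining this with the conclusion of the second paragraph yields the stated equality in distribution.

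The only genuinely delicate point is the independence/uniformity claim in the first step — that conditioning on the event ``exactly $i$ nodes lie in cycles'' does not bias the cyclic structure on $U$. This is exactly what the clean factorization $\binom{n}{i}\,i!\,i n^{n-i-1}$ in \cref{l:sub-graph-size} gives us, and I would spell it out as a factorization of the conditional distribution rather than invoke it implicitly; the remaining steps are bookkeeping.
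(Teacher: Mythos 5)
Your argument for the maximal directed pseudoforest model $G_r$ is correct, and it takes a genuinely different route from the paper: you make the factorization $\binom{n}{i}\,i!\,i n^{n-i-1}$ behind \cref{l:sub-graph-size} explicit and deduce that, conditioned on the cycle-node set $U$, the cyclic part is a uniform permutation of $U$, independent of the attached rooted forest, so the component count is the cycle count of a uniform permutation of $[i]$, which by \cref{l:cycles} is exactly the target distribution. The paper instead argues via the order in which edges are revealed (sampling the cycle edges first and observing that attaching the trees is independent of how the cycle nodes are wired), which is terser but has the advantage of applying verbatim to both graph models. For $G_r$ your bookkeeping is sound: each $r$ with exactly $i$ cycle nodes corresponds to exactly one triple (cycle set, permutation, rooted forest), the factor $i!$ depends on neither of the other two ingredients, and trees neither create nor merge components.

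The gap is that the lemma, as the paper states immediately after it and as it is used in \cref{l:cc}, must also hold for the hashed model $G_q$, and your proof does not cover that case; nor does it follow from the $G_r$ case, because the two conditional distributions are genuinely different. In the hashed model the cycle part conditioned on ``all components are cycles'' is \emph{not} a uniform permutation: configurations with double edges and loops carry different multiplicities of functions $q$ (this is precisely the $2^{-d-l}$ bookkeeping in \cref{l:sample-prob}), and correspondingly the cycle-closing probability per round is $\tfrac{1}{2i-1}$ in \cref{l:cc} versus $\tfrac{1}{i}$ in \cref{l:bounds}. A counting proof in your style is still possible for $G_q$, but it needs an extra step you omit: one must check that the multiplicity $n!\,2^{n-d-l}$ of functions per hashed pseudoforest factorizes into a cycle-part factor (depending only on the cycle graph on $U$, e.g.\ $i!\,2^{i-d-l}$ after splitting off the choice of which keys become cycle edges) and a tree-part factor, using that all double edges and loops necessarily lie in the cycle part, and that the cycle-part factor matches the multiplicities of the hashed model on $i$ nodes. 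Without this, your final identification via \cref{l:cycles} proves only the directed-model half of the statement.
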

Note that this refers to both graph models $G_r$ and $G_q$.
\begin{proof}
	The order in which the edges of a graph are sampled does not change the distribution of the graph.
	Given a graph of $n$ nodes and $i$ edges such that all edges are part of cycles,
	the probability that the remaining $n-i$ edges form trees with the root being part of the existing cycles is independent of how the nodes of the cycles are connected to form any number of components.
\end{proof}

\myparagraph{Configuration Model.}
The \emph{configuration model} \cite{newman2010networks} can be used to describe distributions of random graphs.
In the model, each node is given a fixed number of half-edges.
The graph is obtained by repeatedly connecting half-edges by uniformly sampling from the set of all remaining half-edges.

\begin{lemma}
	\label{l:rounds1}
	The distribution of the hashed graph $G_q$ with $q \sim \mathcal{U}(Q)$ is equal to the distribution of the following graph.
	
	The graph is obtained in two steps.
	First, the degree of each node is revealed by distributing $2n$ half-edges.
	In a second step, the edges of the graph are obtained in a sequence of $n$ rounds. In each round an unmatched half-edge $i \in [2n]$ is chosen arbitrarily and matched to a distinct unmatched half-edge $j \in [2n]$, chosen uniformly at random. The choice of $i$ may depend on the set of half-edges matched previously.
\end{lemma}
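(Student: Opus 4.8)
The plan is to put both random multigraphs into a common ``$2n$ labelled half-edges'' picture and to isolate the only apparent difference between them: in $G_q$ the half-edges are pre-paired in a fixed way, whereas in the two-step process they are paired by the sequential matching procedure. I will argue this difference is invisible to the law of the resulting multigraph on $[n]$.

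First I would fix a bijection $\phi$ between the ``slots'' $S\times\{0,1\}$ and $[2n]$, set $L:=q\circ\phi^{-1}\colon[2n]\to[n]$, and let $M_0:=\{\phi(\{(x,0),(x,1)\}) : x\in S\}$ be the corresponding fixed perfect matching of $[2n]$. For a labelling $\ell\colon[2n]\to[n]$ and a perfect matching $m$ of $[2n]$, write $G(\ell,m)$ for the multigraph on $[n]$ carrying one edge $\{\ell(i),\ell(j)\}$ per pair $\{i,j\}\in m$. Then $G_q = G(L,M_0)$ with $L$ uniform i.i.d., and the degree sequence of $G_q$ is multinomial with parameters $2n$ and $(\tfrac1n,\dots,\tfrac1n)$ — exactly the degree sequence obtained by ``distributing $2n$ half-edges''. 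The two-step process produces $G(L',M)$, where $L'$ is uniform i.i.d.\ and $M$ is the random matching output by the sequential matching procedure, which operates only on half-edge identities.

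The key lemma I would establish is that, \emph{conditioned on the entire labelling $L'$, the sequential matching procedure outputs a uniformly random perfect matching of $[2n]$}; in particular $M$ is uniform and independent of $L'$. This follows by a telescoping argument. Fix a target perfect matching $m$ and condition on a history of rounds $1,\dots,k-1$ all of whose matched pairs lie in $m$. In round $k$ there are $2n-2(k-1)$ unmatched half-edges; the procedure selects some unmatched $i_k$ by an arbitrary (possibly randomised, possibly $L'$- and history-dependent) rule, and this round stays consistent with $m$ iff the uniformly chosen partner $j_k$ equals the $m$-partner $m(i_k)$ of $i_k$, which is still unmatched and hence one of the $2n-2k+1$ admissible choices for $j_k$. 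Thus the conditional probability of remaining consistent with $m$ in round $k$ is $1/(2n-2k+1)$, regardless of how $i_k$ was chosen, and multiplying over $k=1,\dots,n$ gives $\mathbb{P}[M=m\mid L'] = \prod_{k=1}^{n}(2n-2k+1)^{-1} = 1/\big((2n-1)(2n-3)\cdots 3\cdot 1\big)$, independent of both $m$ and $L'$.

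Finally I would combine the two ingredients. Any two perfect matchings of $[2n]$ differ by a permutation of $[2n]$, and precomposing a labelling with a permutation preserves the uniform i.i.d.\ law; hence the law of $G(L',m)$ does not depend on the fixed matching $m$, and it equals $\mathrm{Law}(G_q)$ upon taking $m=M_0$. Since $M$ is uniform and independent of $L'$,
\begin{align*}
\mathbb{P}[\,G(L',M)=g\,]
&= \sum_{m}\mathbb{P}[M=m]\,\mathbb{P}[\,G(L',m)=g\,] \\
&= \sum_{m}\mathbb{P}[M=m]\,\mathbb{P}[\,G_q=g\,] = \mathbb{P}[\,G_q=g\,],
\end{align*}
so the two-step graph and $G_q$ have the same distribution. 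The main obstacle is precisely the permissive rule for choosing the half-edge $i$ in each round — it ``may depend on the set of half-edges matched previously'' and hence implicitly on the revealed degrees — so one has to check that the uniformity of the output matching is robust to this adaptivity; the round-by-round conditioning above is exactly what makes this go through. A minor point worth stating explicitly is that both models are multigraphs that may contain loops and parallel edges, so the claimed equality is an equality of laws of (labelled) multigraphs on $[n]$, which is all the later applications of this lemma need.
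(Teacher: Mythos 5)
Your proof is correct: the half-edge reformulation, the telescoping argument showing the adaptively revealed pairing is a uniform perfect matching independent of the labels, and the exchangeability step showing the graph law does not depend on which fixed matching is used together give exactly the claimed equality of distributions. The paper itself offers no proof for this lemma, deferring to Lemma~5 of the bipartite ShockHash paper \cite{lehmann2023bipartite}, whose argument is the same standard configuration-model / deferred-decisions reasoning you use, so your proposal matches the intended approach.
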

We refer to ShockHash \cite[Lemma 5]{lehmann2023bipartite} for a proof.
We require an analogous result for $G_r$
\begin{lemma}
	\label{l:rounds2}
	The distribution of the maximal directed pseudoforest $G_r$ with $r \sim \mathcal{U}(R)$ is equal to the distribution of the following directed graph.
	
	The number of outgoing edges of each node is fixed to 1.
	The graph is obtained in two steps.
	First, the indegree of each node is revealed by distributing $n$ incoming half-edges.
	In a second step, the edges of the graph are obtained in a sequence of $n$ rounds. In each round an unmatched outgoing half-edge at node $i \in [n]$ is chosen arbitrarily and matched to a distinct unmatched incoming half-edge $j \in [n]$, chosen uniformly at random. The choice of $i$ may depend on the set of half-edges matched previously.
\end{lemma}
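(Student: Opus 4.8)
The plan is to mirror the proof of Lemma~\ref{l:rounds1} (ShockHash, \cite[Lemma 5]{lehmann2023bipartite}), adapting it to the directed, out-degree-one setting. The object $r \sim \mathcal{U}(R)$ with $R = [n]^{[n]}$ is a uniformly random function, and $G_r$ records, for each node $v$, a single directed edge $(v, r(v))$. So each node has exactly one outgoing half-edge (hence the fixed out-degree $1$), and $r(v)$ designates which incoming half-edge it attaches to. I would first set up an explicit bijection between functions $r \in R$ and ``full matchings'' of the $n$ outgoing half-edges to the $n$ incoming half-edges: the outgoing half-edge of node $v$ is matched to the incoming half-edge of node $r(v)$. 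Since each node has exactly one incoming half-edge slot but may receive several edges, I should instead think of the $n$ incoming half-edges as $n$ distinguishable tokens distributed among nodes — equivalently, reveal first the indegree sequence $(\deg^-(1), \ldots, \deg^-(n))$ with $\sum_i \deg^-(i) = n$, which partitions the $n$ incoming half-edges into labelled groups, and this is exactly the two-step description in the statement.

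The core claim to verify is that the sequential matching process in step two produces the uniform distribution over matchings conditioned on the revealed indegree sequence, and moreover that this does not depend on the (possibly adaptive) order in which outgoing half-edges are processed. For this I would argue by induction on the round number: at the start of round $t$, suppose $t-1$ outgoing half-edges have already been matched to $t-1$ distinct incoming half-edges uniformly at random; the next outgoing half-edge (chosen by any rule depending only on the history) is matched to one of the remaining $n-(t-1)$ incoming half-edges uniformly. A standard exchangeability / deferred-decisions argument then shows the resulting full matching is uniform over all $n!$ matchings that are consistent with the indegree sequence, independently of the processing order — this is the same lemma-internal fact used in the undirected ShockHash proof, just with ``outgoing'' and ``incoming'' replacing the two endpoints of an undirected edge. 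Finally, pushing this uniform matching forward through the bijection $r(v) = (\text{node owning the incoming half-edge matched to }v\text{'s outgoing half-edge})$ recovers $r \sim \mathcal{U}(R)$, because the number of $r$ giving a fixed graph is constant (it equals the number of ways to order the incoming half-edges within each node, i.e. $\prod_i \deg^-(i)!$, which is precisely the overcounting factor already absorbed when we ``distribute $n$ incoming half-edges'' in step one).

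The main obstacle, I expect, is bookkeeping the correspondence between ``a uniformly random function $r$'' and ``a uniformly random matching of directed half-edges'' cleanly: one must be careful that the incoming half-edges are treated as labelled (so that the $n!$ matchings are in bijection with something of size $n^n$ only after the indegree-revealing step, not before), and that loops (fixed points $r(v) = v$) and multi-edges are handled without special-casing — they arise automatically since an outgoing half-edge at $v$ may be matched to an incoming half-edge owned by $v$, or two outgoing half-edges at $v$ may both be matched to incoming half-edges owned by the same node $w$. Apart from that, the argument is a routine transcription of the undirected case; I would keep the write-up short by citing \cite[Lemma 5]{lehmann2023bipartite} for the exchangeability step and only spelling out the directed bijection in detail.
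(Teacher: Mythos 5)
Your proposal is correct and follows essentially the same route as the paper: the paper's entire proof is the remark that the argument is analogous to ShockHash \cite[Lemma 5]{lehmann2023bipartite}, and your write-up simply spells out that analogy for the directed, out-degree-one setting (labelled incoming half-edges, uniformity of the adaptively revealed matching via the $\prod_t \frac{1}{n-t+1}$ exchangeability argument, and the constant fiber size when pushing matchings forward to functions). One small wording slip: what you need, and what your parenthetical actually computes, is the number of matchings mapping to a fixed function $r$ with the revealed indegree sequence, namely $\prod_i \deg^-(i)!$, not ``the number of $r$ giving a fixed graph''; since this factor depends only on the revealed indegrees, the uniform matching indeed pushes forward to the uniform conditional distribution of $r$.
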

Again, the proof is analogous to ShockHash \cite[Lemma 5]{lehmann2023bipartite}.

\begin{lemma}
	\label{l:prod}
	For $i\in \mathbb{N}: \prod_{k=2}^{i} (1 - \frac{1}{k})= \frac{1}{i}$
\end{lemma}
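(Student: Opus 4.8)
The plan is to prove this by a direct telescoping argument. First I would rewrite each factor in the product as $1 - \frac{1}{k} = \frac{k-1}{k}$, so that
\begin{align*}
	\prod_{k=2}^{i} \Big(1 - \frac{1}{k}\Big) = \prod_{k=2}^{i} \frac{k-1}{k} = \frac{\prod_{k=2}^{i}(k-1)}{\prod_{k=2}^{i}k} = \frac{(i-1)!}{i!/1} = \frac{1}{i},
\end{align*}
where the numerator $\prod_{k=2}^{i}(k-1) = 1\cdot 2\cdots(i-1) = (i-1)!$ and the denominator $\prod_{k=2}^{i}k = 2\cdot 3\cdots i = i!$ since the factor $1$ is missing. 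Canceling gives $\frac{1}{i}$.

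Alternatively, one could argue by induction on $i$: for the base case $i = 1$ the product is empty and equals $1 = \frac{1}{1}$ (and likewise one can start from $i=2$, where the single factor is $\frac{1}{2}$); for the step, $\prod_{k=2}^{i+1}(1-\frac{1}{k}) = \big(\prod_{k=2}^{i}(1-\frac{1}{k})\big)\cdot\frac{i}{i+1} = \frac{1}{i}\cdot\frac{i}{i+1} = \frac{1}{i+1}$.

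I expect no real obstacle here; the only point that needs a word of care is the convention for the empty product at $i = 1$ (and ensuring the statement is only used for $i \ge 1$, which it is in the later application). Both the telescoping computation and the induction are completely routine, so I would simply present the one-line telescoping version.
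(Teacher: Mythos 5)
Your proposal is correct; the telescoping computation via $1-\frac{1}{k}=\frac{k-1}{k}$ is just the closed-form version of the paper's own argument, which proves the identity by induction with the empty-product base case and the step $\bigl(1-\frac{1}{i+1}\bigr)\frac{1}{i}=\frac{1}{i+1}$ — exactly the alternative you sketch. Both routes are equally valid for this routine fact, so there is nothing to fix.
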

\begin{proof}
	By induction $\prod_{k=2}^{1} (1 - \frac{1}{k}) = 1$ and
	\begin{align*}
	\prod_{k=2}^{i + 1}(1 - \frac{1}{k}) = (1 - \frac{1}{i + 1}) \prod_{k=2}^{i} (1 - \frac{1}{k})= (1 - \frac{1}{i + 1}) \frac{1}{i} = \frac{1}{i + 1}
\end{align*}
\end{proof}
We now have the available tools to analyze the number of components conditioned on the number of nodes in cycles.
\begin{lemma}
	\label{l:bounds}
	$\frac{1}{4}i^{-1/2} \leq \mathbb{E}[2^{-c(G_r)}~|~G_r~\text{has $i$ nodes in cycles}] \leq i^{-1/2}$,
	where $r\sim \mathcal{U}(R)$ and $c(G_r)$ is the number of components of $G_r$.
\end{lemma}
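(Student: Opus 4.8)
The plan is to first pin down the conditional law of $c(G_r)$ exactly, and then sandwich the resulting product by elementary telescoping. By \cref{l:inde}, conditioned on $G_r$ having $i$ nodes in cycles, $c(G_r)$ is distributed as the number of components of a random directed graph on $i$ nodes in which every node has in- and out-degree $1$ — equivalently, the number of cycles of a uniformly random permutation of $[i]$ (there are $i!$ such graphs by \cref{l:cycles}). I would read off this number from the round-based description of \cref{l:rounds2} restricted to these $i$ nodes: in each round the current directed path can be closed into a cycle by matching the head's outgoing half-edge to exactly one of the still-unmatched incoming half-edges, so if round $t$ has $i-t+1$ unmatched incoming half-edges a cycle closes with probability $\tfrac{1}{i-t+1}$, independently of the history. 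This is where the freedom to "choose an unmatched outgoing half-edge arbitrarily" is used: we always extend the current path, and after a closure all nodes of the finished cycle are fully matched, so we may restart at any remaining node, whose own incoming half-edge is the unique closing choice (a $1$-cycle). Hence $c(G_r)$ is distributed as $\sum_{m=1}^{i} Z_m$ with independent $Z_m \sim \mathrm{Bernoulli}(1/m)$, which gives
\begin{align*}
	\mathbb{E}[2^{-c(G_r)} \mid G_r~\text{has $i$ nodes in cycles}] = \prod_{m=1}^{i}\Big(\tfrac{1}{m}\cdot\tfrac12 + \big(1-\tfrac1m\big)\Big) = \prod_{m=1}^{i}\Big(1 - \tfrac{1}{2m}\Big) =: P_i .
\end{align*}

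It then remains to show $\tfrac14 i^{-1/2} \le P_i \le i^{-1/2}$, and I would do this by squaring $P_i = \prod_{m=1}^i \tfrac{2m-1}{2m}$ and telescoping. For the upper bound, $(2m-1)(2m+1) \le (2m)^2$ yields $\big(\tfrac{2m-1}{2m}\big)^2 \le \tfrac{2m-1}{2m+1}$, so $P_i^2 \le \prod_{m=1}^i \tfrac{2m-1}{2m+1} = \tfrac{1}{2i+1} \le \tfrac1i$, hence $P_i \le i^{-1/2}$. For the lower bound I would peel off the $m=1$ factor and use $(2m-1)^2 \ge 2m(2m-2)$, i.e.\ $\big(\tfrac{2m-1}{2m}\big)^2 \ge \tfrac{m-1}{m}$ for $m \ge 2$, together with \cref{l:prod} (which gives $\prod_{m=2}^i \tfrac{m-1}{m} = \tfrac1i$), to obtain $P_i^2 \ge \tfrac14 \cdot \tfrac1i$, hence $P_i \ge \tfrac12 i^{-1/2} \ge \tfrac14 i^{-1/2}$.

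The main obstacle is the first part: carefully justifying via \cref{l:inde} and \cref{l:rounds2} that it suffices to consider only the $i$ nodes lying in cycles, and that in the matching process the cycle-closing events really are independent with probabilities $\tfrac{1}{i-t+1}$ — in particular, that the adaptive choice of which outgoing half-edge to match lets us always extend the current path and restart cleanly after a closure, so that in every round exactly one incoming half-edge closes a cycle, irrespective of the matching made so far. Once the identity $\mathbb{E}[2^{-c(G_r)} \mid \cdot] = P_i$ is in place, the two bounds are routine.
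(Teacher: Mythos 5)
Your proposal is correct and follows essentially the same route as the paper: both reduce via \cref{l:inde} to a random permutation digraph on the $i$ cycle nodes, use the round-based revelation of \cref{l:rounds2} (extend the current path, close a cycle with probability one over the number of unmatched incoming half-edges) to get $\mathbb{E}[2^{-c}]=\prod_{m=1}^{i}\bigl(1-\tfrac{1}{2m}\bigr)$, and then square and telescope using \cref{l:prod}. The only differences are cosmetic: you identify the full law of $c$ as a sum of independent Bernoullis where the paper only runs the expectation recurrence, and your lower-bound pairing $(2m-1)^2\ge 4m(m-1)$ even yields the slightly stronger constant $\tfrac12$ in place of $\tfrac14$.
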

Note that the following proof has similarities to ShockHash \cite[Lemma 6]{lehmann2023bipartite}.
\begin{proof}
	We consider the number of components of a maximal directed pseudoforest $G_r^i$ with $i$ nodes, $R=[i]^{[i]}$, $r\sim \mathcal{U}(R)$, conditioned on the event that each component of $G_r^i$ is a cycle.
	According to \cref{l:inde} the number of components of $G_r^i$ follows the same distribution as the number of components of a maximal directed pseudoforest with $n$ nodes and $i$ nodes in cycles.
	We proceed as described in \cref{l:rounds2} by revealing the graph in a sequence of rounds.
	First, the indegree of each node is revealed.
	Each component is a circle and therefore each node has indegree one.
	We choose the outgoing half-edge at an arbitrary node $x$.
	The outgoing half-edge is matched with one of the $i$ incoming half-edges.
	Let $y$ be the node at the incoming half-edge.
	There are two cases.
	\begin{enumerate}
		\item With probability $\frac{1}{i}$ we have $x=y$.
		In this case, a loop is created, a cycle is closed and the next outgoing half-edge to match is chosen arbitrarily.
		\item Otherwise we merge the nodes to a single one which does not change the number of components. The outgoing half-edge at node $y$ is matched next.
	\end{enumerate}
	In both cases, the distribution of the remaining graph is that of $G_r^ {i-1}$.
	Because of this independence, we can multiply the expectation values to obtain the recurrence
	\begin{align*}
		\mathbb{E}[2^{-c(G_r^i)}] &=\frac{1}{2} \frac{1}{i} \mathbb{E}[2^{-c(G_r^{i-1})}]  + \left(1-\frac{1}{i}\right) \mathbb{E}[2^{-c(G_r^{i-1})}]= \left(1- \frac{1}{2i}\right) \mathbb{E}[2^{-c(G_r^{i-1})}]
	\end{align*}
	With the base case $\mathbb{E}[2^{-c_0}] = 1$ the recurrence is solved and upper bounded by	
	\begin{align*}
		\mathbb{E}[2^{-c(G_r^i)}]&=\prod_{k=1}^{i} 1-\frac{1}{2k} = \sqrt{\prod_{k=1}^{i} \left(1-\frac{1}{2k}\right)^2}\leq \sqrt{\prod_{k=1}^{i} \left(1-\frac{1}{2k}\right) \left(1-\frac{1}{2k + 1}\right)}\\
		&=\sqrt{\prod_{k=2}^{2i + 1} \left(1-\frac{1}{k}\right)} = \sqrt{\frac{1}{2i+1}} \leq i^{-1/2}
	\end{align*}
	Where we used \cref{l:prod}. Analogously, a lower bound is
	\begin{align*}
		\mathbb{E}[2^{-c(G_r^i)}]& = \sqrt{\left(1-\frac{1}{2}\right)^2\prod_{k=2}^{i} \left(1-\frac{1}{2k}\right)^2}= \frac{1}{2}\sqrt{\prod_{k=2}^{i} \left(1-\frac{1}{2k}\right)^2}\\
		&\geq \frac{1}{2}\sqrt{\prod_{k=2}^{i}\left(1-\frac{1}{2k-1}\right) \left(1-\frac{1}{2k-2}\right)} \geq \frac{1}{2}\sqrt{\prod_{k=2}^{2i}\left(1-\frac{1}{k}\right)}\\
		&=\frac{1}{2}\sqrt{\frac{1}{2i}} \geq \frac{1}{4} i^{-1/2}
	\end{align*}
\end{proof}
A similar result for hashed pseudoforests instead of maximal directed pseudoforests is the following.
\begin{lemma}
	\label{l:cc}
	$\mathbb{P}[c(G_q)\geq\frac{1}{4}\ln(i)~|~G_q~\text{is pseudoforest with $i$ nodes in cycles}] \in \Omega_{i \rightarrow \infty}(1)$, where $q\sim \mathcal{U}(Q)$ and $c(G_q)$ is the number of components of $G_q$.
\end{lemma}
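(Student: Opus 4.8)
The plan is to first strip the statement down to a clean combinatorial model and then apply Markov's inequality to $2^{-c(G_q)}$, exactly in the spirit of the proof of \cref{l:bounds}, but for the hashed graph model instead of the directed one.

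First I would reduce. By \cref{l:inde}, the distribution of $c(G_q)$ conditioned on $G_q$ being a pseudoforest with $i$ nodes in cycles equals the distribution of the number of components of a hashed graph on $i$ nodes \emph{all of whose components are cycles}. A graph all of whose components are cycles (loops and double edges included, counted as $1$- and $2$-cycles) is exactly a $2$-regular multigraph, so by \cref{l:rounds1} (applied with $i$ in place of $n$, conditioning the first step on the revealed degree sequence being $(2,\dots,2)$) this graph is distributed as the configuration model on $i$ nodes each carrying two half-edges, i.e.\ a uniformly random matching of the $2i$ half-edges. Call such a random graph $G^i$; it suffices to prove $\mathbb{P}[c(G^i)\ge\tfrac14\ln i]\to 1$.

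Next I would obtain a Bernoulli decomposition of $c(G^i)$ by revealing its edges round by round as in \cref{l:rounds1}, maintaining a current ``path'' of already connected nodes together with its two dangling half-edges (an \emph{anchor} at one end, the \emph{active} one at the other). When the path has $m$ nodes and $t:=i-m$ untouched nodes remain, there are $2t+2$ unmatched half-edges, and matching the active half-edge closes the current cycle precisely when it hits the anchor, which by exchangeability happens with probability $\tfrac{1}{2t+1}$ independently of the path length and of all earlier choices; otherwise the path grows by one node (and $t$ drops by one), and when $t=0$ the cycle is forced to close. Hence for each $t\in\{0,\dots,i-1\}$ there is exactly one moment at which a cycle may close ``at free-count $t$'', these indicators are independent $\mathrm{Ber}(\tfrac{1}{2t+1})$ variables, and every component corresponds to exactly one such closing, so $c(G^i)\stackrel{d}{=}\sum_{t=0}^{i-1}B_t$ with $B_t\sim\mathrm{Ber}(\tfrac{1}{2t+1})$ independent. (Equivalently one gets the one-step recursion $c(G^i)\stackrel{d}{=}B_{i-1}+c(G^{i-1})$, mirroring the recursion used in \cref{l:bounds}.) From this, $\mathbb{E}[2^{-c(G^i)}]=\prod_{t=0}^{i-1}\bigl(1-\tfrac{1}{2(2t+1)}\bigr)\le\exp\!\bigl(-\tfrac12\sum_{t=0}^{i-1}\tfrac{1}{2t+1}\bigr)\le\exp\!\bigl(-\tfrac14\ln(2i+1)\bigr)\le i^{-1/4}$, using $\sum_{t=0}^{i-1}\tfrac{1}{2t+1}\ge\int_0^i\tfrac{\mathrm dt}{2t+1}=\tfrac12\ln(2i+1)$. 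Markov's inequality with threshold $\theta=\tfrac14\ln i$ then gives
\[
\mathbb{P}\bigl[c(G^i)\le\theta\bigr]=\mathbb{P}\bigl[2^{-c(G^i)}\ge 2^{-\theta}\bigr]\le 2^{\theta}\,\mathbb{E}[2^{-c(G^i)}]\le 2^{\frac14\ln i}\,i^{-1/4}=i^{(\ln 2-1)/4},
\]
which tends to $0$ as $i\to\infty$ because $\ln 2<1$, so $\mathbb{P}[c(G^i)\ge\tfrac14\ln i]\to 1$, in particular $\Omega_{i\to\infty}(1)$.

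The only genuinely delicate point is the Bernoulli decomposition: one must argue carefully, via the half-edge revelation of \cref{l:rounds1} and exchangeability of the unmatched half-edges, that the closing events are independent with probabilities exactly $\tfrac{1}{2t+1}$ and, crucially, that the current path length is irrelevant to the future. Everything afterwards is a one-line application of Markov's inequality together with elementary estimates, and the constant $\tfrac14$ in the statement is precisely what makes the resulting exponent $\tfrac{\ln 2-1}{4}$ negative (the argument in fact proves the stronger statement that the event holds with probability tending to $1$).
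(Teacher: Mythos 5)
Your proposal is correct, and its structural core is the same as the paper's: reduce via \cref{l:inde} to a graph on $i$ nodes all of whose components are cycles, reveal that graph with the half-edge process of \cref{l:rounds1}, and observe that the cycle-closing events are independent Bernoulli trials with success probabilities $\tfrac{1}{2i-1},\tfrac{1}{2i-3},\dots,\tfrac13,1$, so that $c$ is a sum of independent indicators. Where you diverge is the concentration step: the paper lower-bounds the mean by $\tfrac12\ln i$, upper-bounds the variance of the Poisson binomial by its mean (at most $\ln(3i)$), and applies Cantelli's inequality, obtaining a probability bound of the form $\bigl(\tfrac14\ln i\bigr)^2/\bigl(\bigl(\tfrac14\ln i\bigr)^2+\ln(3i)\bigr)$; you instead bound the exponential moment $\mathbb{E}[2^{-c}]\le i^{-1/4}$ and apply Markov to $2^{-c}$, a Chernoff-style argument that gives the explicit polynomial failure rate $i^{(\ln 2-1)/4}$. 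Both routes in fact prove the stronger statement that the probability tends to $1$; yours has the aesthetic advantage of reusing exactly the $\mathbb{E}[2^{-c}]$ machinery already set up in \cref{l:bounds} (so no separate variance bound is needed) and of making transparent why the constant $\tfrac14$ works (it is what keeps the exponent $\tfrac{\ln 2-1}{4}$ negative), at the cost of a slightly more delicate justification of the Bernoulli decomposition, which you handle correctly via the arbitrary-choice/uniform-partner freedom in \cref{l:rounds1}. The second-moment route of the paper is marginally more elementary but needs the (standard) fact that a Poisson binomial's variance is at most its mean.
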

\begin{proof}
	The proof is analogous to the previous one.
	We use \cref{l:rounds1} to match a half-edge to one of the remaining $2i-1$ half-edges.
	In each round the number of components increases by one with probability $\frac{1}{2i-1}$.
	Resolving the respective recurrence results in
	\begin{align*}
		&\mathbb{E}[c(G_q)~|~G_q~\text{is pseudoforest with $i$ nodes in cycles}]\\
		=&\phantom{\frac{1}{2}} (\frac{1}{2i-1} + \frac{1}{2i-3} + \frac{1}{2i-5} + \frac{1}{2i-7} + \ldots + \frac{1}{3} + 1) \\			    
		=&\frac{1}{2} (\frac{1}{2i-1} + \frac{1}{2i-1} + \frac{1}{2i-3} + \frac{1}{2i-3} + \ldots + 1 + 1) \\
		>&\frac{1}{2} (\frac{1}{2i-0} + \frac{1}{2i-1} + \frac{1}{2i-2} + \frac{1}{2i-3} + \ldots + \frac{1}{2} + 1)\\
		=&\frac{1}{2} H_{2i} > \frac{1}{2}\ln(i)
	\end{align*}
	Where $H_k$ is the k-th harmonic number.
	We are interested in a lower bound for the probability that the number of components is at least $\frac{1}{4}\ln(i)$.
	We can use the variance to find such a bound.
	The probability distribution of the number of components can be described in terms of a Poisson binomial distribution.
	The variance of a Poisson binomial distribution can be bounded above by the expected value.
	Clearly, $H_{2r} < \ln(3i)$ is an upper bound for the variance.
	Using Cantelli's inequality ($\mathbb{P}[X\geq \mathbb{E}[X] + x] \geq \frac{x^2}{x^2+\sigma^2}$) we find with $x=-\frac{1}{4}\ln(i)$ that with probability $\frac{(\frac{1}{4}\ln(i))^2}{(\frac{1}{4}\ln(i))^2 + \ln(3i)} \in \Omega(1)$ the pseudoforest has at least $\frac{1}{4}\ln(i)$ components.
\end{proof}
The previous result shows that the number of components increases logarithmically in the number of nodes in cycles.
The next step is therefore to show that the cycles are sufficiently large.
\begin{lemma}
	\label{l:cyc}
	$\mathbb{P}[G_q~\text{has at least $\sqrt{n}$ nodes in cycles}~|~G_q~\text{is pseudoforest}] \in \Omega_{n\rightarrow \infty}(1)$, where $q\sim \mathcal{U}(Q)$.
\end{lemma}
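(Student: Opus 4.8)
The plan is to turn the statement into an estimate of a ratio of two explicit sums. By \cref{l:sample-prob}, the conditional law of the number of nodes in cycles is $p_n(i)=C_n\,\mathbb{E}[2^{-c(G_r)}\mid G_r\text{ has }i\text{ nodes in cycles}]\cdot\frac{i}{n^i(n-i)!}$, and by \cref{l:bounds} the expectation lies in $[\tfrac14 i^{-1/2},\,i^{-1/2}]$. So, writing $u_n(i):=\frac{i^{1/2}}{n^i(n-i)!}$, we get $\tfrac14 C_n u_n(i)\le p_n(i)\le C_n u_n(i)$, and since $\sum_{i=1}^n p_n(i)=1$ this forces $\tfrac14 C_n\sum_{i=1}^n u_n(i)\le 1$. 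Consequently
\[
  \mathbb{P}[G_q\text{ has }\ge\sqrt n\text{ nodes in cycles}\mid G_q\text{ is pseudoforest}]=\sum_{i\ge\sqrt n}p_n(i)\ \ge\ \frac{\sum_{i\ge\sqrt n}u_n(i)}{4\sum_{i=1}^n u_n(i)},
\]
so it suffices to show the numerator and denominator have the same order of magnitude in $n$.

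To analyze $u_n$ I would first rewrite $(n-i)!=n!/\prod_{k=0}^{i-1}(n-k)$, so that $u_n(i)=\frac{i^{1/2}}{n!}\prod_{k=0}^{i-1}\bigl(1-\tfrac kn\bigr)$. The two elementary inequalities $1-\tfrac kn\le e^{-k/n}$ and $1-\tfrac kn\ge e^{-k/n-(k/n)^2}$ (the latter for $k/n$ bounded away from $1$) reveal that $\prod_{k=0}^{i-1}(1-k/n)$ is a Gaussian-type weight of the form $e^{-\frac{i(i-1)}{2n}(1+o(1))}$, peaking around $i\approx\sqrt n$. For the numerator I would restrict to the $\Theta(\sqrt n)$ integers $i$ in a window such as $[\sqrt n,2\sqrt n]$: there $i^{1/2}\ge n^{1/4}$ and $\prod_{k=0}^{i-1}(1-k/n)\ge e^{-O(1)}$ (bounding $\sum_{k\le 2\sqrt n}(k/n+(k/n)^2)=2+o(1)$), giving $\sum_{i\ge\sqrt n}u_n(i)\ge c_1\,n^{3/4}/n!$. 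For the denominator, $1-\tfrac kn\le e^{-k/n}$ yields $\sum_{i=1}^n u_n(i)\le\frac1{n!}\sum_{i\ge 1}i^{1/2}e^{-i(i-1)/(2n)}\le c_2\,n^{3/4}/n!$, where the last step is a routine comparison of the unimodal summand with its integral (substitute $i=\sqrt n\,t$, reducing it to $\int_0^\infty\sqrt t\,e^{-t^2/2}\,dt$). Dividing, the displayed ratio is at least $c_1/(4c_2)=\Omega(1)$.

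The one point that requires care is that the peak of $u_n$ sits at $i^\star\approx\sqrt{n/2}<\sqrt n$, so $\{i\ge\sqrt n\}$ is genuinely a constant-probability \emph{tail} event, not the typical value; what makes it work is that the spread of the distribution is also $\Theta(\sqrt n)$, which is exactly what the window lower bound on the numerator exploits. Correspondingly, the main obstacle is the denominator: the trivial bound $u_n(i)\le\sqrt n/n!$ summed over all $i$ overshoots by a factor $n^{3/4}$, so the super-exponential decay coming from $\prod_{k<i}(1-k/n)$ must be used to obtain the matching $n^{3/4}$ scaling. All explicit constants here are immaterial, and this lemma will combine with \cref{l:cc} (applied with $i=\Theta(\sqrt n)$) to give the desired $\Omega(\ln n)$ component count with constant probability.
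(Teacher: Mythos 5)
Your argument is correct and rests on the same two pillars as the paper's proof (\cref{l:sample-prob} and \cref{l:bounds}), but the way you compare the masses above and below $\sqrt n$ is genuinely different. The paper never estimates $C_n$ or any absolute sums: it proves the pointwise reflection bound $p_n(\sqrt n + x)/p_n(\sqrt n - x)\in\Omega(1)$ for all $0\le x\le\sqrt n$, using \cref{l:bounds} and Stirling's approximation to get an explicit constant $\tfrac{1}{4e^2}$, which immediately pairs the mass in $[\sqrt n,2\sqrt n]$ against the mass in $[1,\sqrt n]$. You instead write $u_n(i)=\frac{\sqrt i}{n!}\prod_{k=0}^{i-1}(1-k/n)$, recognize it as a Gaussian-type weight $\approx\frac{\sqrt i}{n!}e^{-i^2/(2n)}$ peaked at scale $\sqrt n$, and bound the tail sum over $[\sqrt n,2\sqrt n]$ below by $c_1 n^{3/4}/n!$ and the full sum above by $c_2 n^{3/4}/n!$ via an integral comparison. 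Your route avoids Stirling and makes the shape of the conditional distribution (peak at $\sqrt{n/2}$, spread $\Theta(\sqrt n)$) explicit, which is arguably more transparent; the paper's reflection trick is shorter because it needs no control of the normalization at all. One small slip in your writeup: to get your displayed lower bound on $\sum_{i\ge\sqrt n}p_n(i)$ you need $C_n\sum_{i=1}^n u_n(i)\ge 1$, which follows from the upper half of the sandwich $p_n(i)\le C_n u_n(i)$; the inequality you actually quote, $\tfrac14 C_n\sum_i u_n(i)\le 1$, is the opposite direction and is not what the step uses. Since both directions follow from your sandwich, this is a citation error rather than a gap, and the rest of the estimates (the $e^{-2-o(1)}$ product bound on the window and the $O(n^{3/4})$ bound on $\sum_i i^{1/2}e^{-i(i-1)/(2n)}$) check out.
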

\begin{proof}
	Let $p_n(i)$ be the probability of sampling a hashed pseudoforest with $i$ nodes in cycles as determined in \cref{l:sample-prob}. We need to show that the probability of sampling a pseudoforest with at least $\sqrt{n}$ nodes in cycles is at least a constant factor larger than sampling a pseudoforest with less than $\sqrt{n}$ nodes in cycles, i.e. $\left(\sum_{i=\sqrt{n}}^{n} p_n(i)\right) / \left(\sum_{i=1}^{\sqrt{n}} p_n(i)\right)\in\Omega(1)$.\\
	A stronger statement is ${\left(\sum_{i=\sqrt{n}}^{2\sqrt{n}} p_n(i)\right) / \left(\sum_{i=1}^{\sqrt{n}} p_n(i)\right) \in \Omega(1)}$.
	In the following we show $\frac{p_n(\sqrt{n}+x)}{p_n(\sqrt{n}-x)} \in \Omega(1)$ for all $0 \leq x \leq \sqrt{n}$.
	This pointwise comparison is an even stronger statement.
	For brevity, we omit rounding of $\sqrt{n}$.
	Using the bounds of \cref{l:bounds} and Stirling's approximation we have:
	\begin{align*}
		&\frac{p_n(\sqrt{n}+x)}{p_n(\sqrt{n}-x)} \geq \frac{1}{4} \frac{(\sqrt{n}-x)^{1/2}}{(\sqrt{n}+x)^{1/2}} \cdot \frac{\sqrt{n}+x}{\sqrt{n}-x}\cdot \frac{n^{\sqrt{n}-x}}{n^{\sqrt{n}+x}}\cdot \frac{(n-\sqrt{n}+x)!}{(n-\sqrt{n}-x)!}\\
		&=\frac{1}{4}\left(\frac{\sqrt{n} + x}{\sqrt{n} - x}\right)^{1/2} \cdot n^{-2x} \cdot \sqrt{\frac{n-\sqrt{n} + x}{n-\sqrt{n} - x}} \cdot \frac{e^{n-\sqrt{n} - x}}{e^{n-\sqrt{n} + x}} \cdot \frac{(n-\sqrt{n} + x)^{n-\sqrt{n} + x}}{(n-\sqrt{n} - x)^{n-\sqrt{n} - x}}\\
		&> \frac{1}{4} n^{-2x} \cdot e^{-2x} \cdot \left(\frac{n-\sqrt{n} + x}{n-\sqrt{n} - x}\right)^{n-\sqrt{n}} \cdot \left((n-\sqrt{n} + x)(n-\sqrt{n} - x)\right)^x\\
		&= \frac{1}{4} e^{-2x} \cdot \left(\frac{1+\frac{x}{n-\sqrt{n}}}{1-\frac{x}{n-\sqrt{n}}}\right)^{n-\sqrt{n}} \cdot \left(\frac{(n-\sqrt{n} + x)(n-\sqrt{n} - x)}{n^2}\right)^x\\
		&>\frac{1}{4} e^{-2x} \cdot \frac{e^x}{e^{-x}} \cdot \left(\frac{(n-\sqrt{n} + \sqrt{n})(n-\sqrt{n} - \sqrt{n})}{n^2}\right)^{\sqrt{n}}\\
		&=\frac{1}{4} \left(\frac{n(n-2\sqrt{n})}{n^2}\right)^{\sqrt{n}} = \frac{1}{4} \left(1-\frac{2}{\sqrt{n}}\right)^{\sqrt{n}} = \frac{1}{4} \frac{1}{e^2} 
	\end{align*}
\end{proof}
Combining the results gives us the following.
\begin{theorem}
	\label{t:pc}
	$\mathbb{P}[c(G_q)\geq \frac{1}{8}\ln(n)~|~ G_q~\text{is pseudoforest}] \in \Omega_{n\rightarrow \infty}(1)$, where $c(G_q)$ is the number of components of $G_q$ and $q \sim \mathcal{U}(Q)$.
\end{theorem}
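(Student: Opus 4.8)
The plan is to combine \cref{l:cyc} and \cref{l:cc} via the law of total probability, conditioning on the number $i$ of nodes lying in cycles. First I would record the elementary observation that for any $i \geq \sqrt{n}$ one has $\frac14\ln(i) \geq \frac14\ln(\sqrt{n}) = \frac18\ln(n)$, so that the event $\{c(G_q)\geq \frac14\ln(i)\}$ is contained in the event $\{c(G_q)\geq \frac18\ln(n)\}$. Consequently, conditioned on $G_q$ being a pseudoforest with exactly $i$ nodes in cycles, \cref{l:cc} already yields $\mathbb{P}[c(G_q)\geq \frac18\ln(n)] \geq \mathbb{P}[c(G_q)\geq \frac14\ln(i)] \geq c_0$ for some absolute constant $c_0>0$, as long as $i$ exceeds the (unspecified) threshold $i_0$ hidden in the $\Omega_{i\to\infty}(1)$ statement of \cref{l:cc}.

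Next I would unfold the conditional probability. Writing $p_n(i) = \mathbb{P}[G_q\text{ has }i\text{ nodes in cycles}~|~G_q\text{ is pseudoforest}]$ as in \cref{l:sample-prob}, the law of total probability gives
\begin{align*}
	\mathbb{P}[c(G_q)\geq \tfrac18\ln n~|~G_q\text{ pseudoforest}]
	\;\geq\; \sum_{i=\lceil\sqrt n\rceil}^{n} p_n(i)\,\mathbb{P}[c(G_q)\geq \tfrac18\ln n~|~G_q\text{ pseudoforest with }i\text{ nodes in cycles}].
\end{align*}
For all sufficiently large $n$ we have $\sqrt n \geq i_0$, so by the previous paragraph every conditional probability appearing in the sum is at least $c_0$. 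Pulling $c_0$ out of the sum bounds the right-hand side below by $c_0\sum_{i\geq \lceil\sqrt n\rceil} p_n(i)$, and \cref{l:cyc} states precisely that this tail sum is at least some constant $c_1>0$ for all large $n$. Hence the probability in question is at least $c_0c_1>0$ for all sufficiently large $n$, which is the claim.

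I do not expect a genuine obstacle here: the two lemmas were tailored to dovetail, and the only point needing care is the uniformity of the constants — the $\Omega_{i\to\infty}(1)$ guarantee of \cref{l:cc} must be invoked at the index $i\geq\sqrt n$, so I must make sure that $\sqrt n$ eventually dominates the threshold implicit in that asymptotic statement, which is immediate since $\sqrt n\to\infty$. The floor/ceiling on $\sqrt n$ is pure bookkeeping and only perturbs constants, so it can be absorbed without comment.
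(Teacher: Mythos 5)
Your proposal is correct and follows essentially the same route as the paper: the paper's two-sentence proof is exactly this combination of \cref{l:cyc} (at least $\sqrt{n}$ nodes in cycles with constant probability) and \cref{l:cc} (then at least $\frac{1}{4}\ln(\sqrt{n}) = \frac{1}{8}\ln(n)$ components with constant probability), with your law-of-total-probability decomposition and the uniformity of the constant from the $\Omega_{i\to\infty}(1)$ statement simply making explicit what the paper leaves implicit.
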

\begin{proof}
	According to \cref{l:cyc} at least $\sqrt{n}$ nodes are in cycles with constant probability.
	Using \cref{l:cc}  this results in $\frac{1}{4}\ln(\sqrt{n}) = \frac{1}{8}\ln(n)$ components with constant probability.
\end{proof}

\begin{corollary}
	$\mathbb{E}[c(G_q)~|~ G_q~\text{is pseudoforest}] \in \Theta_{n\rightarrow \infty}(\ln(n))$, where $c(G_q)$ is the number of components of $G_q$ and $q \sim \mathcal{U}(Q)$.
\end{corollary}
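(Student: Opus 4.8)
The plan is to prove the two directions of the $\Theta$ separately. The lower bound is essentially free: \cref{t:pc} supplies a constant $\varepsilon>0$ such that $\mathbb{P}[c(G_q)\ge\tfrac18\ln(n)\mid G_q\text{ is pseudoforest}]\ge\varepsilon$ for all large $n$, and since $c(G_q)\ge 0$ always, a one-line first-moment argument gives $\mathbb{E}[c(G_q)\mid G_q\text{ is pseudoforest}]\ge\varepsilon\cdot\tfrac18\ln(n)\in\Omega(\ln n)$.

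For the upper bound I would condition on the number $i$ of nodes that lie on cycles and apply the law of total expectation,
\[
\mathbb{E}[c(G_q)\mid G_q\text{ is pseudoforest}]=\sum_{i=1}^{n}p_n(i)\,\mathbb{E}[c(G_q)\mid G_q\text{ is pseudoforest with $i$ nodes in cycles}],
\]
with $p_n(i)$ the probabilities from \cref{l:sample-prob}. It then suffices to bound each conditional expectation by $O(\ln n)$ uniformly in $i$. By \cref{l:inde}, conditioned on $i$ nodes in cycles the component count is distributed as the number of components of a random graph on $i$ nodes all of whose components are cycles; and the round-by-round recursion already written out in the proof of \cref{l:cc} evaluates this expectation to $\sum_{j=1}^{i}\tfrac1{2j-1}\le\sum_{j=1}^i\tfrac1j=H_i\le H_n=O(\ln n)$. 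Substituting this uniform bound into the sum and using $\sum_i p_n(i)=1$ yields $\mathbb{E}[c(G_q)\mid G_q\text{ is pseudoforest}]=O(\ln n)$, which together with the lower bound proves the corollary.

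The only slightly delicate point — and the part I expect to have to write out carefully — is the uniform upper bound on the conditional expectation: the proof of \cref{l:cc} only records the lower bound $\tfrac12 H_{2i}$ on $\mathbb{E}[c(G_q)\mid G_q\text{ is pseudoforest with $i$ nodes in cycles}]$, so I would re-run the same harmonic-sum manipulation in the other direction, using $\tfrac1{2j-1}\le\tfrac1j$, to obtain the matching $O(\ln i)$ upper bound. This is routine; everything else is just assembling \cref{t:pc}, \cref{l:sample-prob}, \cref{l:inde}, and linearity of expectation, so no genuinely new obstacle arises.
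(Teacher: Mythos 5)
Your proposal is correct and follows essentially the same route as the paper's (very terse) proof: the lower bound comes from \cref{t:pc} via a first-moment argument, and the upper bound comes from conditioning on the number $i\le n$ of nodes in cycles and bounding the conditional expectation computed in the proof of \cref{l:cc} by $O(\ln n)$. Your added detail—explicitly upper-bounding the odd-harmonic sum $\sum_{j=1}^{i}\frac{1}{2j-1}$ by $H_i$—is exactly the small step the paper leaves implicit.
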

\begin{proof}
	Use \cref{t:pc} for the lower bound.
	There can be at most all $n$ nodes in cycles of the pseudoforest and we can apply \cref{l:cc} for the upper bound.
\end{proof}

\subsection{Space Savings of MorphisHash}
We now show that MorphisHash can convert each component into space savings compared to ShockHash.
As a first step we transition the previous graph results into the world of matrices.

\begin{lemma}
	\label{l:pm}
	The defect of the incidence matrix $A$ of a pseudoforest is at least the number of the pseudoforests components.
\end{lemma}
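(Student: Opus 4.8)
The plan is to bound $\rank(A)$ from above by $n - c(G)$, which is equivalent to the statement since there are $n$ keys and so $A$ is a square matrix in $\mathbb{F}_2^{n\times n}$, whence $\defe(A) = n - \rank(A)$. Concretely, I would exhibit a $c(G)$-dimensional subspace of the left kernel of $A$, i.e.\ $c(G)$ linearly independent $\mathbb{F}_2$-relations among the rows of $A$, using one relation per connected component.

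First I would fix a component with vertex set $V \subseteq [n]$ and claim $\sum_{i \in V} A_i = 0$, where $A_i$ is the $i$-th row. This is verified column by column, i.e.\ edge by edge: a loop contributes nothing since its column is the zero column by convention; an ordinary edge $e = \{u,w\}$ with $u \ne w$ and $u,w \in V$ contributes $A_{u,e} + A_{w,e} = 1 + 1 = 0$; and an edge with no endpoint in $V$ contributes $0$. Since $V$ is a component, no edge has exactly one endpoint in $V$, so these cases are exhaustive and the claim holds. Equivalently, the characteristic vector $\mathbf{1}_V \in \mathbb{F}_2^n$ lies in the left kernel of $A$.

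Next I would observe that, ranging over the components $V_1, \dots, V_{c(G)}$, the vectors $\mathbf{1}_{V_1}, \dots, \mathbf{1}_{V_{c(G)}}$ are $\mathbb{F}_2$-linearly independent: any combination over a nonempty index set $T$ equals $\mathbf{1}_{\bigcup_{j \in T} V_j}$, which is nonzero because the $V_j$ are nonempty and pairwise disjoint. Hence the left kernel of $A$ has dimension at least $c(G)$, so $\rank(A) \le n - c(G)$ and $\defe(A) = n - \rank(A) \ge c(G)$, as desired.

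I do not expect a real obstacle here; the only thing needing care is the bookkeeping for loops (handled by the zero-column convention) and parallel edges, both of which the incidence count above absorbs automatically — indeed that argument never uses the pseudoforest structure at all. An alternative route would realise, for each component, the $\mathbb{F}_2$-indicator of that component's unique cycle as a vector in the \emph{right} kernel of $A$, these being independent because their edge-supports are pairwise disjoint; that mirrors the ``flip one cycle'' intuition from the start of the section, but it genuinely relies on $G$ being a pseudoforest, whereas the row-dependency argument above does not.
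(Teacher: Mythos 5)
Your proof is correct and follows essentially the same route as the paper: for each component you sum the rows of its nodes, observe the sum vanishes because every edge (loops having zero columns) contributes both endpoints or none, and conclude by disjointness/independence of the component indicator vectors that $\rank(A)\le n-c(G)$, hence $\defe(A)\ge c(G)$. Your write-up just spells out the bookkeeping (square matrix, loops, parallel edges) more explicitly than the paper does.
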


\begin{proof}
	For each component $C$, summing up the rows $A_j$ of the respective nodes $j \in C$ results in zero rows because for each component, the two endpoints of each edge are included in the summed up rows.
	The rows summed up for each zero row are disjoint and therefore in particular linearly independent combinations.
\end{proof}
\begin{lemma}[\cite{bourgain2010singularity}]
	\label{l:msqp}
	The probability that a random square matrix (of any field) with $n$ rows and columns has full rank approaches 1 for large $n$.
\end{lemma}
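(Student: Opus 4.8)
Read literally, the statement fails over a fixed finite field (a uniform random $n\times n$ matrix over $\mathbb{F}_2$ is nonsingular with probability $\prod_{k\geq1}(1-2^{-k})\approx 0.289$), so — consistent with the cited reference — I take it to mean that a random $n\times n$ matrix $M$ with i.i.d.\ $0/1$ (equivalently $\pm1$) entries over a field of characteristic zero is nonsingular with probability $1-o(1)$. The plan is to prove the equivalent fact that the singularity probability tends to $0$. I would use that $M$ has full rank exactly when every row increases the linear span of the preceding rows, so it suffices to show that with probability $1-o(1)$ no row $R_{j+1}$ lies in $V_j:=\operatorname{span}(R_1,\dots,R_j)$, and I would split the $n$ steps into \emph{early} ones ($n-j\geq\log n$) and \emph{late} ones (the last $O(\log n)$ steps).

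For an early step I would use the crude bound that a subspace of dimension $j$ meets $\{0,1\}^n$ in at most $2^j$ points (project onto $j$ coordinates on which the projection is injective), so $\mathbb{P}[R_{j+1}\in V_j\mid V_j]\leq 2^{j-n}$; summing over the early steps gives $O(1/n)$. For a late step I would condition on the rows seen so far and pick any nonzero $w\perp V_j$: if $w$ has at least $k_0$ nonzero coordinates, the Erd\H{o}s--Littlewood--Offord anti-concentration inequality gives $\mathbb{P}[R_{j+1}\in V_j]\leq\mathbb{P}[\langle w,R_{j+1}\rangle=0]=O(k_0^{-1/2})$ (for $0/1$ entries, first write $R_{j+1}=\tfrac12(\mathbf 1+\varepsilon)$ with $\varepsilon$ uniform in $\{\pm1\}^n$), and over the $O(\log n)$ late steps this contributes $O(\log n\cdot k_0^{-1/2})$.

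The step I expect to be the main obstacle is ruling out the remaining bad event: that at some late step \emph{every} vector orthogonal to the rows seen so far has fewer than $k_0$ nonzero coordinates — equivalently, some nonzero $v$ with $|\operatorname{supp}(v)|<k_0$ is orthogonal to all $\geq n-O(\log n)$ of those rows. I would bound this by a union over the $\sum_{m<k_0}\binom{n}{m}$ candidate supports $T$: for a fixed $T$ of size $m$, the restrictions of the rows to $T$ are i.i.d.\ uniform in $\{0,1\}^m$, and each successive restriction enlarges their span with probability at least $\tfrac12$ until it is all of $\mathbb{R}^T$, so they fail to span $\mathbb{R}^T$ (equivalently, a nonzero $v$ supported on $T$ is orthogonal to all of them) with probability $2^{O(m\log n)-\Omega(n)}$; summing over $m$ and $T$ gives $2^{O(k_0\log n)-\Omega(n)}$. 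Choosing $k_0=\operatorname{polylog}(n)$ — say $\log^3 n$ — makes this $o(1)$ while keeping $\log n\cdot k_0^{-1/2}=o(1)$, and adding the three contributions yields singularity probability $o(1)$. Finally I would remark that the cited paper \cite{bourgain2010singularity} sharpens $o(1)$ to the exponential bound $(1/\sqrt2+o(1))^n$ by replacing the crude count of sparse orthogonal vectors with an inverse Littlewood--Offord (Hal\'{a}sz-type) estimate, which is not needed for the qualitative statement used here.
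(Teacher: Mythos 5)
Your proposal is correct, but it necessarily takes a different route from the paper, because the paper offers no proof at all: \cref{l:msqp} is justified purely by the citation to \cite{bourgain2010singularity}, which treats random sign (Bernoulli) matrices over the rationals and gives the sharp exponential singularity bound. What you supply instead is the classical Koml\'os-style elementary argument: row-by-row exposure with the $2^{j-n}$ counting bound for early rows, Erd\H{o}s--Littlewood--Offord anti-concentration for the last $O(\log n)$ rows, and a union bound over small supports to exclude sparse normal vectors. The sketch is sound (the dimension-increment argument for a fixed support $T$, the stochastic domination by $\mathrm{Bin}(N,1/2)$, and the choice $k_0=\log^3 n$ all fit together), and it buys a self-contained, quantitatively weaker $o(1)$ statement, whereas the citation buys the much stronger $(1/\sqrt2+o(1))^n$ bound that the paper does not actually need.

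Your preliminary observation is also a substantive and correct criticism of the lemma as stated: over a \emph{fixed finite} field, which is the setting the paper actually works in ($H$ is a random matrix over $\mathbb{F}_2$), the full-rank probability of a uniform square matrix tends to $\prod_{k\ge 1}(1-2^{-k})\approx 0.289$, not to $1$, so the clause ``of any field'' is too strong and the cited result does not cover it. This does not endanger the paper's conclusions: the downstream \cref{l:mrqp} (rectangular, $m\in o(n)$ columns) is true over $\mathbb{F}_2$ with the direct elementary bound $\prod_{i=0}^{m-1}\bigl(1-2^{i-n}\bigr)\ge 1-2^{m-n}\to 1$, and \cref{l:sp} only needs $H$ to have full rank with constant probability, for which even the $0.289$ limit would suffice. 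Your restriction to characteristic zero is the right way to make the square statement literally true, but for the paper's purposes the cleaner fix is to prove the rectangular $\mathbb{F}_2$ statement directly rather than to route it through the square case.
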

\begin{lemma}
	\label{l:mrqp}
	The probability that a random rectangular matrix (of any field) with $n$ rows and $m \in o(n)$ columns has full rank approaches $1$ for large $n$.
\end{lemma}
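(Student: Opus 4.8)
The plan is to reduce the rectangular case to the square case already settled by \cref{l:msqp}. Let $M$ be the random $n\times m$ matrix in question; since $m\in o(n)$ we have $m\le n$ for all large $n$. First I would form an $n\times n$ matrix $M'$ by appending $n-m$ fresh columns, each drawn independently from the same (per-entry) law as the columns of $M$. Then $M'$ is distributed as the random square matrix to which \cref{l:msqp} applies, and its first $m$ columns are exactly the columns of $M$. If $M'$ has full rank $n$ its columns are linearly independent, hence so is every sub-collection of them; in particular the first $m$ columns are independent, i.e.\ $M$ has full column rank $m$. Therefore $\mathbb{P}[\rank M = m]\ge \mathbb{P}[\rank M' = n]$, and the right-hand side tends to $1$ by \cref{l:msqp}, which gives the claim.

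The only point requiring care is the distributional bookkeeping: one must fix, once and for all, what ``a random matrix of any field'' means (uniform i.i.d.\ entries over a finite field, or the law used in \cref{l:msqp} over an infinite field) and then observe that with i.i.d.\ entries the padded matrix $M'$ really is such a random square matrix and its leading $m$ columns really are distributed as $M$. Note that this reduction never uses $m\in o(n)$ beyond $m\le n$, so the stated hypothesis is simply harmless here.

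If one instead prefers a self-contained argument that does use the $m\in o(n)$ hypothesis and directly covers the case we need (entries uniform over a finite field $\mathbb{F}_q$, in particular $\mathbb{F}_2$), I would argue by a union bound over candidate kernel vectors: $\mathbb{P}[\rank M < m]=\mathbb{P}[\exists\, v\ne 0:\ Mv=0]\le \sum_{0\ne v\in \mathbb{F}_q^m}\mathbb{P}[Mv=0]$. For each fixed nonzero $v$, conditioning on all columns of $M$ except one whose $v$-coordinate is nonzero shows $Mv$ is uniform on $\mathbb{F}_q^n$, hence $\mathbb{P}[Mv=0]=q^{-n}$; summing over the fewer than $q^m$ nonzero $v$ gives $\mathbb{P}[\rank M < m] < q^{\,m-n}\to 0$, using exactly that $m\in o(n)$. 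I do not expect a genuine obstacle on either route: the only slightly delicate issue is the modelling one of pinning down the distribution consistently with \cref{l:msqp} so that the padding --- or, in the second route, the per-column conditioning --- is legitimate.
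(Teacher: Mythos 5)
Your first route is exactly the paper's proof: view the random $n\times m$ matrix as the first $m$ columns of a random $n\times n$ matrix, and note that full rank of the square matrix (\cref{l:msqp}) forces linear independence of those $m$ columns, so the proposal is correct and takes essentially the same approach (and, as you observe, the paper's argument likewise uses only $m\le n$, not the full strength of $m\in o(n)$). Your second, union-bound argument over kernel vectors is a correct self-contained alternative for finite fields that genuinely uses $n-m\to\infty$, but it is not needed to match the paper.
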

\begin{proof}
	A necessary condition that a square matrix with $n$ rows has full rank is that the first $m$ columns have full rank.
	The probability that this rectangular submatrix has full rank is therefore bounded below by \cref{l:msqp}.
\end{proof}
We now show our main result.
\begin{theorem}
	\label{l:sp}
	$\mathbb{P}[\exists x: AHx=d~|~G_q~\text{is pseudoforest}]\in \Omega_{n\rightarrow \infty}(1)$, where $q \sim \mathcal{U}(Q)$, incidence matrix $A$ and vector $d$ of $G_q$ are as described in the algorithm, $b=n-\frac{1}{9}\ln(n)$ and $H \sim \mathcal{U}(\mathbb{F}_2^{n \times b})$.
\end{theorem}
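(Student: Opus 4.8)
The plan is the following: recast the solvability of $AHx=d$ over $\mathbb{F}_2$ as the triviality of a subspace intersection, and then bound the bad event by a union bound over the random matrix $H$. Write $\defe(A)=n-\rank(A)$ for the defect of the (square) incidence matrix. First I would translate the statement into linear algebra. The system $AHx=d$ has a solution iff $d\in\operatorname{im}(AH)$, i.e. iff $d^{\top}w=0$ for every $w$ in the left kernel $\ker\bigl((AH)^{\top}\bigr)=\ker(H^{\top}A^{\top})=\{w:A^{\top}w\in\ker(H^{\top})\}$. Conditioning on $G_q$ being a pseudoforest, \cref{s:mh} already established that $Ay=d$ is solvable, i.e. $d\in\operatorname{im}(A)$, so $d^{\top}w=0$ holds for every $w\in\ker(A^{\top})$. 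Hence it is enough that every $w$ with $A^{\top}w\in\ker(H^{\top})$ already lies in $\ker(A^{\top})$, and this is guaranteed as soon as
\[
\operatorname{im}(A^{\top})\cap\ker(H^{\top})=\{0\},
\]
since then $A^{\top}w\in\ker(H^{\top})$ forces $A^{\top}w=0$. So the theorem reduces to showing this intersection is trivial with probability $\Omega(1)$.

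Next I would control the dimension of $\operatorname{im}(A^{\top})$. By \cref{t:pc}, conditioned on $G_q$ being a pseudoforest it has $c(G_q)\ge\frac18\ln n$ components with probability $\Omega(1)$, and I condition additionally on this event. Then \cref{l:pm} gives $\defe(A)\ge c(G_q)\ge\frac18\ln n$, so $\rank(A^{\top})=\rank(A)\le n-\frac18\ln n$ and $\operatorname{im}(A^{\top})$ contains fewer than $2^{\,n-\frac18\ln n}$ nonzero vectors. Crucially, $H\sim\mathcal U(\mathbb{F}_2^{n\times b})$ is independent of $G_q$ (hence of $A$ and $d$), so it is still uniformly distributed after this conditioning.

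Then comes the union bound. For a fixed nonzero $v\in\mathbb{F}_2^n$ the row vector $v^{\top}H$ is uniform in $\mathbb{F}_2^{1\times b}$ (its $b$ entries are independent fair coins because $v\ne0$), and $v\in\ker(H^{\top})$ iff $v^{\top}H=0$, an event of probability $2^{-b}$. Summing over the at most $2^{\,n-\frac18\ln n}$ nonzero vectors of $\operatorname{im}(A^{\top})$,
\[
\mathbb{P}\bigl[\operatorname{im}(A^{\top})\cap\ker(H^{\top})\ne\{0\}\bigr]\le 2^{\,n-\frac18\ln n}\cdot 2^{-b}=2^{\,n-\frac18\ln n-b}=2^{-\frac1{72}\ln n}=n^{-(\ln 2)/72},
\]
using $b=n-\frac19\ln n$, and this tends to $0$. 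Combining with the $\Omega(1)$ lower bound on the component count gives $\mathbb{P}[\exists x:AHx=d\mid G_q\text{ is a pseudoforest}]\ge\Omega(1)\cdot(1-o(1))\in\Omega(1)$.

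The main obstacle is the first step: getting the duality right and cleanly separating the orthogonality constraints that the pseudoforest property already satisfies from the genuinely new ones, so that what remains is exactly the condition $\operatorname{im}(A^{\top})\cap\ker(H^{\top})=\{0\}$; after that the estimate is routine. One could alternatively argue on the primal side — show that $H$ has full column rank (a transpose of \cref{l:mrqp}) and that the affine solution space of $Ay=d$ meets the column space of $H$ — but the relevant intersection estimate is more delicate there because $b=n-o(n)$ is not $o(n)$, whereas the union bound above needs no full-rank input at all.
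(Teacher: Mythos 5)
Your proposal is correct, and it proves \cref{l:sp} by a genuinely different final step than the paper. The shared part is the reduction via \cref{t:pc} and \cref{l:pm}: with probability $\Omega(1)$ (conditioned on a pseudoforest) the incidence matrix satisfies $\defe(A)\geq\frac18\ln n$, and $H$ remains uniform after this conditioning. From there the paper argues on the primal side: it invokes \cref{l:msqp}/\cref{l:mrqp} to get that $H$ has full column rank and then counts dimensions, asserting that the affine solution set of $Ay=d$ and the column space of $H$ must meet once $\defe(A)+\rank(H)>n$. You instead dualize: $AHx=d$ is solvable iff $d$ is orthogonal to $\ker\bigl((AH)^{\top}\bigr)$, the pseudoforest property already gives orthogonality to $\ker(A^{\top})$, and the residual condition $\operatorname{im}(A^{\top})\cap\ker(H^{\top})=\{0\}$ is killed by a union bound, $2^{\rank(A)}\cdot 2^{-b}\leq 2^{-\frac1{72}\ln n}=n^{-(\ln 2)/72}\to 0$. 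Your route buys two things: it yields an explicit $1-n^{-\Theta(1)}$ success probability given many components, and it needs no rank statement about $H$ at all --- which matters, since \cref{l:mrqp} as stated assumes $m\in o(n)$ columns while here $b=n-\frac19\ln n$, and since a dimension count alone does not force an \emph{affine} subspace to meet a fixed subspace (a coset of $\ker(A)$ can be disjoint from $\operatorname{im}(H)$ even when $\defe(A)+\rank(H)>n$); the paper's argument implicitly relies on the randomness of $H$ beyond full rank, whereas your union bound makes that use of randomness explicit. The paper's approach, in exchange, is shorter and reuses its matrix lemmas. Your closing remark about the primal alternative being more delicate is exactly the right diagnosis.
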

\begin{proof}
	In \cref{t:pc} we showed that a hashed pseudoforest has at least $\frac{1}{8}\ln(n)$ components with constant probability.
	According to \cref{l:pm} a direct consequence is that the incidence matrix $A$ of a random pseudoforest has a defect of at least $\frac{1}{8}\ln(n)$ with constant probability.
	According to \cref{l:mrqp}, the probability that $H$ has full rank is at least constant.
	The system $AHx=d$ has a solution if there is a vector $y$ which solves $Ay=d$ and simultaneously $Hx=y$.
	Such a vector $y$ exists if the two solution spaces have to intersect, which happens once $\defe(A)+\rank(H) - n > 0$.
	Since $A$ and $H$ are uncorrelated we have with constant probability that both $H$ has full rank $b$ and simultaneously $A$ has at least a defect of $\frac{1}{8}\ln(n)$. With $b=n-\frac{1}{9}\ln(n)$ we have $\defe(A) +\rank(H) - n \geq \frac{1}{8}\ln(n) + (n-\frac{1}{9}\ln(n)) - n  > 0$ with constant probability.
\end{proof}
We use the above result to measure the space savings compared to ShockHash.
\begin{corollary}
	Compared to ShockHash, MorphisHash is at least $\frac{1}{9}\ln(n) - \mathcal{O}(1)$ bits more space efficient in expectation while requiring a constant factor more time.
\end{corollary}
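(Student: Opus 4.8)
The plan is to compare the two constructions component by component. Both store exactly two things that can differ in size: a retrieval structure encoding the edge orientations, and the successful seed; the hash $h'_s$ is derived from the seed and the graph itself is not stored, so the space difference is entirely accounted for by these two components. I would show that MorphisHash's retrieval structure is $\frac{1}{9}\ln(n)$ bits smaller than ShockHash's, while its expected seed length is only $O(1)$ bits larger.

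For the retrieval structure, MorphisHash stores the vector $x\in\mathbb{F}_2^b$ with $b=n-\frac{1}{9}\ln(n)$, hence exactly $b$ bits. ShockHash stores a general-purpose $1$-bit retrieval structure for the orientation function $f:S\to\{0,1\}$ — it does not exploit that $f$ lies in a low-dimensional affine subspace, which is precisely the redundancy MorphisHash removes — and since such a structure must be able to represent an arbitrary Boolean function on $S$, it needs at least $n$ bits (for instance the square-matrix variant with $H'\in\mathbb{F}_2^{n\times n}$, which by \cref{l:msqp} has full rank with probability tending to $1$). Thus MorphisHash saves $n-b=\frac{1}{9}\ln(n)$ bits here.

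For the seed, note that ShockHash and MorphisHash scan the same stream of candidate seeds and stop at the first one that works; only the stopping criterion differs. Write $p_{\mathrm{SH}}=\mathbb{P}[G_q\text{ is a pseudoforest}]$ and $p_{\mathrm{MH}}=\mathbb{P}[\exists x:AHx=d]$. A solvable system forces $G_q$ to be a pseudoforest, so $p_{\mathrm{MH}}\le p_{\mathrm{SH}}$, while \cref{l:sp} gives $p_{\mathrm{MH}}=p_{\mathrm{SH}}\cdot\mathbb{P}[\exists x:AHx=d\mid G_q\text{ pseudoforest}]\ge\Omega(1)\cdot p_{\mathrm{SH}}$; hence $p_{\mathrm{MH}}=\Theta(p_{\mathrm{SH}})$. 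Using the same encoding of the winning seed in both methods, the expected encoded length is $\log_2(1/p)+\Theta(1)$ (via the standard estimate $\mathbb{E}[\log_2\mathrm{Geom}(p)]=\log_2(1/p)+\Theta(1)$ plus a bounded encoding overhead), so a constant-factor change of the success probability changes the expected seed length by only $O(1)$ bits. Adding the two contributions, the expected space of MorphisHash is at most that of ShockHash minus $\frac{1}{9}\ln(n)-O(1)$, as claimed.

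For the time bound, MorphisHash keeps ShockHash's cheap pseudoforest check as a pre-filter, so the number of graph samples grows only by the factor $p_{\mathrm{SH}}/p_{\mathrm{MH}}=\Theta(1)$, and the one new operation — solving $AHx=d$ — is run only for the $\Theta(1)$ seeds (in expectation) that pass the filter, at polynomial cost, which is dominated by the super-polynomially many seed trials; so the construction time grows by at most a constant factor. The step I expect to require the most care is the seed-length comparison: since $1/p_{\mathrm{SH}}$ may itself grow super-polynomially in $n$, one must argue carefully that a constant-factor drop in the per-seed success probability becomes only an additive $O(1)$ in the expected number of bits spent on the successful seed — which relies on using identical, reasonably efficient seed encodings in both constructions, a stochastic-domination coupling of the two search processes (the winning MorphisHash index always exceeds the winning ShockHash index), and the estimate $\mathbb{E}[\log_2\mathrm{Geom}(p)]=\log_2(1/p)+\Theta(1)$.
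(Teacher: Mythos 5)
Your proposal is correct and follows essentially the same route as the paper's proof: the savings come from storing the $b$-bit vector $x$ with $b=n-\frac{1}{9}\ln(n)$ instead of ShockHash's $n$-bit retrieval structure (via Theorem~\ref{l:sp}), while the constant rejection probability only adds $\mathcal{O}(1)$ expected bits to the stored seed and a constant factor to construction time, with equation solving dominated by the seed search. Your treatment of the seed-length step (geometric-distribution estimate and constant-factor change in success probability) just makes explicit what the paper states more tersely.
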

\begin{proof}
	ShockHash requires at least $n$ bits to store the orientation of all keys in a retrieval structure.
	MorphisHash has to store the solution vector $x$ instead, requiring exactly $b$ bits.
	According to \cref{l:sp} we can choose $b=n-\frac{1}{9}\ln(n)$ to obtain the desired space savings.
	However, there is a constant probability that a seed pair has to be rejected because there is no solution for $x$.
	MorphisHash therefore has to check a constant factor more seeds in expectation, consequently increasing the expected space required to store the seed by a constant number of bits.
	Analogously, the construction time grows by a constant factor in expectation.
	The time required to solve an expected constant number of equation systems is dominated by the seed search.
\end{proof}

\section{Partitioning}
\label{s:partitioning}
The time required to find a seed in MorphisHash and ShockHash grows exponentially with $n$.
To keep construction feasible for a large number of keys, we first partition the input into equally sized subsets of manageable size.
For consistency, we use $n$ for the size of those subsets, and refer to them as \emph{base cases} in the following.
MorphisHash is then applied on those base cases.
Note that the following two partitioning schemes are also applied on ShockHash and we refer to their paper for more details \cite[Section 7]{lehmann2023bipartite}.

\myparagraph{MorphisHash-RS.}
We use RecSplit \cite{esposito2020recsplit} to recursively split the input into smaller subsets.
Once sufficiently small subsets are obtained we apply MorphisHash on those subsets as a base case.
RecSplit is space efficient but has significant query time overheads caused by traversing the tree.
In MorphisHash-RS we store the solution vector $x$ of the retrieval structure directly next to the corresponding seed to improve locality.

\myparagraph{MorphisHash-Flat.}
The input keys are first hashed into buckets.
Using thresholds, some keys are bumped such that the bucket does not exceed the desired base case size $n$.
The bumped keys are then used to fill up buckets which did not reach the desired size.
We apply MorphisHash on each bucket.

\section{Experiments}
\label{s:experiments}
In this section we experimentally evaluate MorphisHash.
We show the effect of the new parameter $b$ introduced by MorphisHash.
We then compare MorphisHash-Flat and MorphisHash-RS with state-of-the-art competitors.
Our source code is public under the General Public License \cite{sourceCodeMorphisHash}.
We integrate MorphisHash into an existing benchmark framework \cite{sourceCodeMphfExperiments}, which was used for the comparison with competitors.
The benchmark framework is described in detail in \cite{lehmann2025modern}.
We perform all experiments on a Core i7-11700 CPU which has 48 KiB L1 and 512 KiB L2 data cache per core.
The CPU has a total of 16 MiB L3 cache.
The machine has 64 GiB of dual-channel DDR4-3200 RAM.
Note that our experiments are at a scale where variances are relatively small, we therefore omit them for better readability.

\begin{figure}[t]
	\begin{subfigure}[t]{\textwidth}
		\centering
		\begin{tikzpicture}
			\centering
			\ref*{legendHashEvals}
		\end{tikzpicture}
	\end{subfigure}
	\\
	\begin{subfigure}[b]{0.48\textwidth}
		    \begin{tikzpicture}
        \begin{axis}[
            title={},
            xlabel={$n$},
            ymin=-0.1,
            width=5cm,	
            height=5cm,
            every axis plot post/.append style={mark repeat=2},
            ylabel={Bits space overhead}, %
            cycle list name=myColorList,
          ]

          \addplot coordinates { (10,0.000623) (12,0.001351) (14,0.01701) };
          
          \addplot coordinates { (10,1.98017) (12,2.15936) (14,2.25384) (16,2.31045) (18,2.33948) (20,2.34407) (22,2.34104) (24,2.32289) (26,2.29606) (28,2.26671) (30,2.24846) (32,2.21135) (34,2.18641) (36,2.17118) (38,2.12968) (40,2.11947) (42,2.09761) (44,2.11943) (46,2.02232) (48,2.07459) (50,2.02907) (52,2.02207) (54,2.01841) (56,2.00451) (58,1.96151) (60,1.98313) (62,1.97916) (64,1.96054) (66,1.99383) (68,1.99487) (70,1.98906) };

          \addplot coordinates { (10,0.690483) (12,0.757064) (14,0.797149) (16,0.816655) (18,0.82153) (20,0.816292) (22,0.804155) (24,0.788188) (26,0.768846) (28,0.749424) (30,0.727341) (32,0.707539) (34,0.685614) (36,0.667494) (38,0.647917) (40,0.634827) (42,0.61862) (44,0.602825) (46,0.589153) (48,0.580282) (50,0.568118) (52,0.561065) (54,0.546187) (56,0.543304) (58,0.53279) (60,0.538096) (62,0.545733) (64,0.532941) (66,0.510035) (68,0.534737) (70,0.527264) };

          \addplot coordinates { (10,0.308536) (12,0.267136) (14,0.263788) (16,0.263618) (18,0.260781) (20,0.25499) (22,0.246482) (24,0.236776) (26,0.225516) (28,0.214755) (30,0.203793) (32,0.193806) (34,0.181612) (36,0.172795) (38,0.163637) (40,0.158496) (42,0.147371) (44,0.138758) (46,0.133825) (48,0.125076) (50,0.118109) (52,0.118201) (54,0.106814) (56,0.112332) (58,0.109616) (60,0.096978) (62,0.103668) (64,0.107749) (66,0.099214) (68,0.086879) (70,0.101065) };

        \end{axis}
    \end{tikzpicture}
		\label{fig:internalfa}
		\caption{Idealized space overhead over the lower bound $\log_2(n^n/n!)$. For an average successful seed $s$ we charge $\log_2(s)$ bits plus the bits required for retrieval i.e. $n$ for ShockHash and $b$ for MorphisHash.}
	\end{subfigure}
	\hfill
	\begin{subfigure}[b]{0.48\textwidth}
		    \begin{tikzpicture}
        \begin{axis}[
            xlabel={$n$},
            ylabel={Avg. successful seed},
            ymode=log,
            ymax=3e5,
            width=5cm,	
            height=5cm,
            legend columns=5,
            transpose legend,
            every axis plot post/.append style={mark repeat=2},
            legend to name=legendHashEvals,
          ]

          \addplot coordinates { (10,2756.92) (12,18631.4) (14,128975) };

          \addlegendentry{Brute-force \cite{esposito2020recsplit}};

          \addplot coordinates { (10,3.99194) (12,5.44174) (14,7.31854) (16,9.7896) (18,13.0655) (20,17.3491) (22,23.0548) (24,30.5118) (26,40.4421) (28,53.5326) (30,71.0984) (32,94.2499) (34,125.426) (36,167.287) (38,221.822) (40,296.963) (42,397.646) (44,536.431) (46,698.217) (48,959.886) (50,1270.14) (52,1711.04) (54,2305.2) (56,3085.47) (58,4110.06) (60,5571.01) (62,7549.71) (64,10100.8) (66,13812.4) (68,18648.0) (70,25074.1) };
          
          \addlegendentry{Bip. ShockHash \cite{lehmann2023bipartite}};
          
          \addplot coordinates { (10,7.264445) (12,9.604127) (14,12.7068) (16,16.8066) (18,22.225) (20,29.3963) (22,38.8999) (24,51.5224) (26,68.2844) (28,90.6423) (30,120.335) (32,160.09) (34,212.973) (36,284.004) (38,378.748) (40,506.459) (42,676.996) (44,905.493) (46,1212.13) (48,1626.32) (50,2181.49) (52,2930.78) (54,3927.58) (56,5289.82) (58,7109.51) (60,9594.27) (62,12984.9) (64,17412) (66,23326.6) (68,31741.8) (70,42712.7) };
        
          \addlegendentry{Bip. MorphisHash b=n-3};
          
          \addplot coordinates { (10,18.145) (12,23.2032) (14,30.2991) (16,39.8329) (18,52.5386) (20,69.4504) (22,91.9586) (24,121.987) (26,161.993) (28,215.488) (30,287.001) (32,382.77) (34,510.671) (36,682.435) (38,912.906) (40,1223.37) (42,1636.98) (44,2193.05) (46,2944.93) (48,3948.53) (50,5303.93) (52,7139.05) (54,9570.33) (56,12926.1) (58,17407.9) (60,23363) (62,31550.5) (64,42643.7) (66,57330.1) (68,76988.1) (70,104529) };
          
          \addlegendentry{Bip. MorphisHash b=n-6};

        \end{axis}
    \end{tikzpicture}
		\caption{Average successful seed. Note that we use Bip. ShockHash and Bip. MorphisHash without the quad split technique (see \cite[Section 8.3]{lehmann2023bipartite}).\\}
	\end{subfigure}
	\caption{Space and avg. successful seed for ShockHash, MorphisHash and brute force search.}
	\label{fig:internalf}
\end{figure}

\begin{figure}[t]
	\centering
	\begin{minipage}{0.45\textwidth}
		    \begin{tikzpicture}
        \begin{axis}[
           title={},
           ylabel={Avg. successful seed},
           legend style={at={(0.2,0.6)}, anchor=west},
           ymin=0,
            width=5cm,	
            height=5cm,
            xlabel={Bits space overhead}, %
            cycle list name=myColorList,
          ]
       		\node at (0.5,5300) {$b=n-6$};
       		\node at (2.1,1700) {$b=n$};
          \addplot coordinates { (2.02907,1270.14) };
          \addlegendentry{Bip. ShockHash \cite{lehmann2023bipartite}};
          \addplot coordinates { (0.0865903,5246.71) (0.149891,3796.26) (0.319193,2848.9) (0.567712,2187.31) (0.952822,1759.62) (1.50777,1505.14) (2.27828,1385.98) };
          \addlegendentry {Bip. MorphisHash};
        \end{axis}
    \end{tikzpicture}
		\caption{Comparing ShockHash with the MorphisHash trade-off between time and space for fixed $n=50$ and $b=\{n-6, \ldots, n\}$.}
		\label{fig:diffka}
	\end{minipage}
	\hfill
	\begin{minipage}{0.45\textwidth}
		    \begin{tikzpicture}
        \begin{axis}[
           title={},
           xlabel={n},
           ymin=0,
           xmin =1,
            width=5cm,	
            height=5cm,
            ylabel={Avg. number of components},
            cycle list name=compCycle,
          ]
          \addplot coordinates { (2,1.0) (4,1.06083) (6,1.12148) (8,1.17318) (10,1.21434) (12,1.2489) (14,1.28102) (16,1.30606) (18,1.33333) (20,1.35457) (22,1.37494) (24,1.39584) (26,1.41096) (28,1.43073) (30,1.44234) (32,1.45929) (34,1.47382) (36,1.48513) (38,1.49669) (40,1.50839) (42,1.52047) (44,1.53156) (46,1.54298) (48,1.5514) (50,1.56089) (52,1.57279) (54,1.5792) (56,1.58541) (58,1.59323) (60,1.60375) (62,1.61212) (64,1.61862) (66,1.62407) (68,1.63427) (70,1.63722) };
        \end{axis}
    \end{tikzpicture}
		\caption{Experimentally measured average number of components when sampling bipartite pseudoforests.}
		\label{fig:components}
	\end{minipage}
\end{figure}

\subsection{MorphisHash vs ShockHash without partitioning}
\label{ss:internal}

In the following we compare Bipartite MorphisHash with Bip. ShockHash.
MorphisHash has the additional parameter $b$, which determines the size of the retrieval structure.
Smaller $b$ values result in less space overhead at the cost of more seed tests.
Note that in theory we choose $b=n-\frac{1}{9}\ln(n)$ but in practice where $n$ is small it is more intuitive to work with $n$ minus a small constant.
\cref{fig:internalf} shows that Bip. MorphisHash has to check roughly a constant factor more seeds than Bip. ShockHash when $b$ is fixed to $n$ minus a constant.
For $b=n-6$ this factor is $\approx 4$.
At the same time, MorphisHash almost completely eliminates the remaining space overhead.
MorphisHash comes below 0.1 bits of space overhead when using $b=n-6$ while ShockHash has about 2 bits of space overhead.
Thus, MorphisHash has roughly 20 times less space overhead.
For $n=54$, MorphisHash has $\frac{0.11}{54} \approx 0.002$ bits per key space overhead.

\cref{fig:diffka} gives another perspective.
In this plot, we fix $n$ and vary $b$.
Interestingly, ShockHash even outperforms MorphisHash for $b=n$ .
This is because MorphisHash requires $n$ bits for the retrieval structure in this case just like ShockHash.
However, there is still the chance that the equation system of MorphisHash has no solution resulting in more retries and therefore in a higher space consumption of the seed and a higher construction time.
For smaller $b$ the space overhead approaches 0.
In the extreme case of $b=0$ and respective zero dimensional retrieval vector, MorphisHash is equivalent to simple brute force search because all keys can only use the $h_0$ candidate function.
The average successful seed grows rapidly with smaller $b$ as it is increasingly less likely to stumble upon a pseudoforest which has at least $n-b$ many components.
A pseudoforest with less than $n-b$ components may still result in a solvable (overdetermined) equation system, but this becomes exponentially less likely with every component below $n-b$.
\cref{fig:diffka} uses $n=50$.
The expected number of components for $n=50$ is 1.56 as shown in \cref{fig:components}.

\subsection{Choosing $b$ in MorphisHash-RS and MorphisHash-Flat}
\label{ss:optibapp}
Space improvements in MorphisHash-RS and MorphisHash-Flat can be made either by (1) increasing the base case size $n$ which reduces the space overhead of the partitioning technique or (2) decreasing $b$ which reduces the space overhead of MorphisHash.
In both cases the construction time increases.
We experimented with different values of $b$ to identify the configurations which dominate the construction time and space trade-off.
We determined that $b=n-4$ is a good choice for MorphisHash-RS and $b=n-2$ for MorphisHash-Flat.
MorphisHash-Flat is a less space efficient partitioning technique compared to MorphisHash-RS.
Space savings can be made more easily by increasing $n$ instead of decreasing $b$.

\subsection{Comparison to Competitors}
\label{ss:comp}

\begin{figure}[t]
	\centering
	\begin{tikzpicture}
		\centering
		\ref*{legendEvalParetoQuery}
	\end{tikzpicture}
\vspace{3mm}
	    \begin{tikzpicture}
        \begin{axis}[
            plotEvalPareto,
            xlabel={Overhead (Bits/Key)},
            ylabel={Throughput (Keys/s)},
            xmode=log,
            ymode=log,
            xmin=0.01,
            xmax=1,
            ymin=1.01e4,
            log x ticks with fixed point,
            patch,
            patch type=rectangle,
            shader=flat,
            legend to name=legend2,
            legend style={nodes={scale=0.9, transform shape}},
            legend columns=3,
          ]
          
          \addplot[mark=consensus,color=colorConsensus,solid,mark=none,line width=0] coordinates { (0.01505,502748) (10,502748) (10,19990) (0.01505,19990) (0.011,325188) (10,325188) (10,19990) (0.011,19990) (0.011,170538) (10,170538) (10,19990) (0.011,19990) (0.011,59296) (10,59296) (10,19990) (0.011,19990) (0.011,29714.9) (10,29714.9) (10,19990) (0.011,19990) (0.08762,2.62343e+06) (10,2.62343e+06) (10,19990) (0.08762,19990) (0.0508,1.54756e+06) (10,1.54756e+06) (10,19990) (0.0508,19990) };
          \addplot[mark=square,color=colorBipartiteShockHash,solid,solid,mark=none,line width=0] coordinates { (0.08718,178979) (10,178979) (10,19990) (0.08718,19990) (0.07861,127014) (10,127014) (10,19990) (0.07861,19990) (0.06684,94599.7) (10,94599.7) (10,19990) (0.06684,19990) (0.06039,78145) (10,78145) (10,19990) (0.06039,19990) (0.07216,95551) (10,95551) (10,19990) (0.07216,19990) (0.05751,58515.4) (10,58515.4) (10,19990) (0.05751,19990) };
          \addplot[mark=consensus,color=colorConsensus,solid,mark=none,line width=0] coordinates { (0.13553,3.72842e+06) (10,3.72842e+06) (10,19990) (0.13553,19990) };
          \addplot[mark=square,color=colorBipartiteShockHash,solid,solid,mark=none,line width=0] coordinates { (0.05349,42374.4) (10,42374.4) (10,19990) (0.05349,19990) (0.18186,1.4726e+06) (10,1.4726e+06) (10,19990) (0.18186,19990) (0.05149,26603.1) (10,26603.1) (10,19990) (0.05149,19990) };
          \addplot[mark=flippedTriangle,color=colorBipartiteShockHashFlat,densely dotted,solid,mark=none,line width=0] coordinates { (0.29151,1.4368e+06) (10,1.4368e+06) (10,19990) (0.29151,19990) (0.22504,1.16078e+06) (10,1.16078e+06) (10,19990) (0.22504,19990) (0.182,882044) (10,882044) (10,19990) (0.182,19990) (0.19693,1.04059e+06) (10,1.04059e+06) (10,19990) (0.19693,19990) (0.15897,737115) (10,737115) (10,19990) (0.15897,19990) (0.12207,299642) (10,299642) (10,19990) (0.12207,19990) (0.14094,542859) (10,542859) (10,19990) (0.14094,19990) (0.13436,434316) (10,434316) (10,19990) (0.13436,19990) (0.11611,226302) (10,226302) (10,19990) (0.11611,19990) (0.10424,108588) (10,108588) (10,19990) (0.10424,19990) (0.11106,148403) (10,148403) (10,19990) (0.11106,19990) (0.09821,66668.1) (10,66668.1) (10,19990) (0.09821,19990) (0.09639,45322.3) (10,45322.3) (10,19990) (0.09639,19990) };
          \addplot[mark=phobic,color=colorDensePtHash,densely dotted,solid,mark=none,line width=0] coordinates { (0.62645,2.96209e+06) (10,2.96209e+06) (10,19990) (0.62645,19990) (0.62612,2.94386e+06) (10,2.94386e+06) (10,19990) (0.62612,19990) (0.4955,1.55948e+06) (10,1.55948e+06) (10,19990) (0.4955,19990) (0.49479,1.55807e+06) (10,1.55807e+06) (10,19990) (0.49479,19990) (0.42312,740362) (10,740362) (10,19990) (0.42312,19990) (0.42323,739656) (10,739656) (10,19990) (0.42323,19990) (0.37102,322631) (10,322631) (10,19990) (0.37102,19990) (0.37163,323914) (10,323914) (10,19990) (0.37163,19990) (0.59155,1.5687e+06) (10,1.5687e+06) (10,19990) (0.59155,19990) (0.51722,743793) (10,743793) (10,19990) (0.51722,19990) (0.51389,739645) (10,739645) (10,19990) (0.51389,19990) (0.33529,134320) (10,134320) (10,19990) (0.33529,19990) (0.3355,134897) (10,134897) (10,19990) (0.3355,19990) (0.67678,1.57146e+06) (10,1.57146e+06) (10,19990) (0.67678,19990) (0.5978,739689) (10,739689) (10,19990) (0.5978,19990) (0.59796,743030) (10,743030) (10,19990) (0.59796,19990) (0.67439,744003) (10,744003) (10,19990) (0.67439,19990) (0.67287,740757) (10,740757) (10,19990) (0.67287,19990) (0.45583,324715) (10,324715) (10,19990) (0.45583,19990) (0.42231,134963) (10,134963) (10,19990) (0.42231,19990) (0.4173,134673) (10,134673) (10,19990) (0.4173,19990) (0.53834,324582) (10,324582) (10,19990) (0.53834,19990) (0.60569,322760) (10,322760) (10,19990) (0.60569,19990) (0.55681,134607) (10,134607) (10,19990) (0.55681,19990) (0.48943,134556) (10,134556) (10,19990) (0.48943,19990) (0.48577,134317) (10,134317) (10,19990) (0.48577,19990) (0.68257,324774) (10,324774) (10,19990) (0.68257,19990) (0.62004,134760) (10,134760) (10,19990) (0.62004,19990) };

          \addplot[mark=square,color=colorBipartiteShockHash,solid,color=white,line legend] coordinates { (0.07,40000.0) };
          \addplot[mark=flippedTriangle,color=colorBipartiteShockHashFlat,densely dotted,color=white,line legend] coordinates { (0.2,100000.0) };
          \addplot[mark=phobic,color=colorDensePtHash,densely dotted,color=white,mark=phobicInverse,line legend] coordinates { (0.5,100000.0) };
          \addplot[mark=consensus,color=colorConsensus,color=white,mark=consensusInverse,line legend] coordinates { (0.03,100000.0) };
          
        \end{axis}
    \end{tikzpicture}
	\hfill
	\input{fig/paretoConstructionHeatmapLog2}
	\caption{Dominance maps indicating the approach with the fastest queries, given a specific trade-off between space and construction time with (right) and without (left) MorphisHash on 100\,M keys. Space overhead in bits per key over the lower bound of 1.44.}
	\label{fig:pareto}
\end{figure}

\begin{table}[t]
	\caption{Performance of various methods on 100\,M keys. In the first part, configurations are chosen such that the construction times are about equal (sorted by space efficiency). In the second part, configurations are chosen such that space consumption is almost equal.}
	\label{tab:overviewTable}
	
\addtolength\tabcolsep{-1.9pt}
\small
\centering
\begin{tabular}[t]{l rrr}
	\toprule
	Method & Space & Query & Construction \\
	& (bits/key) & (ns/query) & (ns/key) \\ \midrule

                   Consensus-RS, $k$=$32768$, $o$=$0.0025$ & 1.447 & 222 & 6\,733 \\
   Bip. MorphisHash-RS, base case size $n$=$52$, $b$=n-$4$ & 1.501 & 137 & 6\,669 \\
                Bip. ShockHash-RS, base case size $n$=$66$ & 1.523 & 147 & 7\,186 \\
 Bip. MorphisHash-Flat, base case size $n$=$88$, $b$=n-$2$ & 1.541 &  75 & 6\,330 \\
              Bip. ShockHash-Flat, base case size $n$=$96$ & 1.554 &  76 & 6\,676 \\
                            PHOBIC, $\lambda$=$8.85$, IC-R & 1.749 &  49 & 6\,426 \\
 
 \midrule
 
             Bip. ShockHash-RS, base case size $n$=$128$ & 1.489 & 131 & 172\,738 \\
 Bip. MorphisHash-RS, base case size $n$=$64$, $b$=n-$4$ & 1.489 & 139 &   8\,085 \\

	\bottomrule
	
\end{tabular}

\end{table}

We compare MorphisHash-RS and MorphisHash-Flat with state-of-the-art competitors.
We select the following space efficient competitors based on a recent survey \cite{lehmann2025modern}: \consensus-RS\cite{lehmann2025combined}, ShockHash-Flat, ShockHash-RS and PHOBIC \cite{PHOBIC}.
We test a wide range of configurations for each competitor and compare them in \cref{fig:pareto}.
A selection of configurations is shown in \cref{tab:overviewTable}.
As can be seen in the plot and the table, MorphisHash-RS is about 0.02 bits per key more space efficient than ShockHash-RS when compared for equal construction time.
This corresponds to a reduction of 27\% in space overhead.
When compared for equal space consumption of 1.489 bits per key, MorphisHash-RS is $\frac{172\,738}{8085}\approx \speedup$ times faster to construct (see \cref{tab:overviewTable}).
For MorphisHash-Flat we select $b$ less aggressively compared to MorphisHash-RS (\cref{ss:optibapp}) and obtain a space improvement of about 0.01 bits per key.
According to \cref{fig:pareto} MorphisHash dominates ShockHash in the overall space, construction and query time trade-off.
The next best competitor in terms of space efficiency is PHOBIC which is a clear winner in terms of query throughput.
In the other direction, we have the recently published \consensus-RS, which can reach space overheads as low as 0.001 bits per key at the cost of additional query time.
A negative result regarding non-minimal PHFs can be found \fvref{s:neg-result}.

\section{Conclusion and Future Work}
\label{s:conclusion}
MorphisHash almost completely eliminates the remaining redundancy in ShockHash.
This is particularly effective when combined with a space efficient partitioning technique.
Our compression scheme might be of a more general interest, further examples can be found \fvref{s:appl}.

In future work we plan to improve the space efficiency of partitioning techniques as those are a major source of space overhead.
We are hopeful that a partitioning technique that involves the novel \consensus technique puts further trade-offs for MorphisHash into reach.

\bibliography{paper}
\clearpage
\appendix

\section{Other Compression Applications}
\label{s:appl}
In this section we give further examples where our technique of combining linear constraints with a retrieval structure can be applied.

\subsection{Compressing a 2-Coloring}
\myparagraph{Problem Statement.}
Given a graph $(V, E)$ with $|V|=n$ vertices.
A 2-coloring is a function $f: V \rightarrow\mathbb{F}_2$ such that for all edges $\{v, w\} \in E: f(v) \neq f(w)$.
The problem is to store a 2-coloring and allow querying of $f$ without knowing the set of edges.
We assume that the graph has at least one 2-coloring.

\myparagraph{The Affine Solution Space.}
The idea is that for each component of the graph we are allowed to flip the coloring of the respective nodes.
Given any 2-coloring $y \in \mathbb{F}_2^V$ such that $y_i$ is the color of node $i \in V$, we may flip the color of a component by adding $z \in \mathbb{F}_2^V$ where $z_i=1$ if and only if the node is in the component.
We can do so for each component independently resulting in an affine subspace.

\myparagraph{The Equation System.}
We are now interested in the equation system which has this affine subspace as solution space.
We can describe the constraints of a 2-coloring using linear equations.
For each edge $\{v, w\} \in E$ we have the constraint that $y_v \neq y_w$ which is equivalent to $y_v + y_w = 1$.
Let $A \in \mathbb{F}_2^{E \times V}$ be the transposed incidence matrix of the graph.
Clearly, the system we have to solve is $Ay=\mathbf{1}$.
Like in MorphisHash we combine the system with a retrieval structure using a random matrix $H \in \mathbb{F}_2^{V \times b}$ to obtain $AHx=\mathbf{1}$, where $b$ is again the tuning parameter.

\myparagraph{Analysis.}
Let $c$ be the number of components of the graph.
Clearly, the lower bound for storing a 2-coloring is $n-c$ bits because there are two ways to color each component.
Choosing $b=n-c$ and using an expected constant number of retries for the random matrix if the system has no solution we can store the coloring using an expected $n-c + \mathcal{O}(1)$ bits or $n-c + \mathcal{O}(\log(c))$ bits if we also have to store $c$ itself.

\subsection{A Retrieval Structure that Only Allows Difference Queries}
\myparagraph{Problem Statement.}
Given a set of $n$ keys $S$ taken from a large universe $U$ and a function $f: S \rightarrow [p]$, where $p$ is a prime number.
We are looking for a data structure that gives the result of $f(a) - f(b)$ for any pair $a, b \in S$.
The data structure knows $S$ and $f$ during construction but not afterwards.

\myparagraph{The Affine Solution Space.}
Let $y\in\mathbb{F}_p^S$ be such that $y_i = f(i)$.
Clearly, we can add any multiple of $\mathbf{1}$ to $y$ while maintaining the differences $y_a-y_b$.

\myparagraph{The Equation System.}
We assign an arbitrary order to $S$.
The difference of consecutive keys $a$ and $b$ result in $n-1$ linear constraints $y_a - y_b = f(a) - f(b)$.
As always, we combine this system with a retrieval structure $y=Hx$ using the solution vector $x \in \mathbb{F}_p^b$ and a random matrix $H \in \mathbb{F}_p^{n \times b}$.

\myparagraph{Analysis.}
Clearly, the space lower bound for this problem is $\log_2(\frac{p^n}{p}) = (n-1) \log_2(p)$ bits.
Using $b=p-1$, our approach requires an expected constant number of retries and therefore $(n-1) \log_2(p) + \mathcal{O}(1)$ bits in expectation when storing $x$ using arithmetic coding.

\section{Negative Result for Non-Minimal PHF}
\label{s:neg-result}
We also considered our compression technique for the non-minimal case.
In the non-minimal case, each component either has one more node then edges or the same number of nodes and edges.
A component with the same number of nodes and edges has two possible orientations of the cycle and all other edges are oriented away from the cycle.
In a component with $t$ nodes and $t-1$ edges (i.e. a tree) any of the $t$ nodes can be selected as roots and all edges are oriented away from that root.
However, there are two reasons why we cannot efficiently compress in the non-minimal case.

\begin{enumerate}
	\item Our technique is not applicable because the possible orientations of a tree with $t>2$ cannot be expressed as the solution of a linear equation system and would therefore require non-linear solving techniques.
	
	\item We found that the entropy of the edge orientations is higher than the space lower bound for non-minimal PHFs.
	For a tree with $t_i$ nodes and therefore $t_i-1$ edges, $t_i$ out of $2^{t_i - 1}$ orientations of the tree are possible.
	The entropy therefore calculates as $\sum \log_2(2^{t_i - 1}/t_i) = \sum t_i - 1 - \log_2(t_i)$ bits.
	For trees with an additional edge we charge $\sum p_i - 1$ bits, where $p_i$ is the number of nodes.
	For a load of 50\%, the PHF lower bound is 0.44 bits per key and the experimentally determined entropy of edge orientations is 0.56 bits per key.
	
\end{enumerate}

\begin{figure}[h!]
	    \begin{tikzpicture}
        \begin{axis}[
            plotEvalPareto,
            xlabel={Overhead (Bits/Key)},
            ylabel={Throughput (Keys/s)},
            xmode=log,
            ymode=log,
            xmin=0.01,
            xmax=1,
            ymin=1.01e4,
            log x ticks with fixed point,
            legend to name=legendEvalParetoQuery,
            legend style={nodes={scale=0.9, transform shape}},
            legend columns=3,
            only marks
          ]

          \tikz \fill [white] (10000.1,0) rectangle (00.2,0.2);
          
          \addplot[mark=flippedTriangle,color=colorBipartiteShockHashFlat,densely dotted,mark repeat*=4] coordinates { (0.09639,45322.3) (0.09821,66668.1) (0.10424,108588) (0.11106,148403) (0.11611,226302) (0.12207,299642) (0.13436,434316) (0.14094,542859) (0.15897,737115) (0.182,882044) (0.19693,1.04059e+06) (0.22504,1.16078e+06) (0.29151,1.4368e+06) (0.4083,1.54887e+06) };
          \addlegendentry{Bip. ShockHash-Flat \cite{lehmann2023bipartite}}
          \addplot[mark=square,color=colorBipartiteShockHash,solid,mark repeat*=4] coordinates { (0.04647,5783) (0.04809,13086.9) (0.05149,26603.1) (0.05349,42374.4) (0.05751,58515.4) (0.06039,78145) (0.0654,80112.2) (0.06684,94599.7) (0.07216,95551) (0.07256,115576) (0.07861,127014) (0.08194,174442) (0.08718,178979) (0.08879,180093) (0.09349,180178) (0.09743,205953) (0.11035,214581) (0.11541,218186) (0.12625,256523) (0.13574,265258) (0.14165,273286) (0.15256,279853) (0.1753,415828) (0.18125,447417) (0.18186,1.4726e+06) (0.1874,1.86411e+06) (0.20967,2.01983e+06) };
          \addlegendentry{Bip. ShockHash-RS \cite{lehmann2023bipartite}}
          \addplot[mark=phobic,color=colorDensePtHash,densely dotted] coordinates { (0.33529,134320) (0.3355,134897) (0.37102,322631) (0.37163,323914) (0.42312,740362) (0.49479,1.55807e+06) (0.4955,1.55948e+06) (0.59155,1.5687e+06) (0.62612,2.94386e+06) (0.62645,2.96209e+06) (0.84162,2.97221e+06) (0.89763,4.69484e+06) (0.89795,4.6966e+06) (1.0587,4.70323e+06) (1.06079,4.73126e+06) (1.17768,4.74068e+06) (1.4084,6.28694e+06) (1.40996,6.30835e+06) (1.66198,6.32671e+06) (1.82962,6.35647e+06) (2.014,6.36497e+06) (2.01759,6.38488e+06) (2.2887,6.39672e+06) (2.32278,7.32493e+06) (2.67008,7.36269e+06) (2.95334,7.3888e+06) (3.36902,7.44879e+06) (3.92697,7.47217e+06) (3.93516,7.48447e+06) (4.59104,7.51597e+06) (4.59864,7.52219e+06) };
          \addlegendentry{PHOBIC \cite{PHOBIC}}
          \addplot[mark=Mercedes star flipped,color=colorMorphisHashFlat,mark repeat*=4] coordinates { (0.09243,47300) (0.09393,102908) (0.0959,103086) (0.09744,127945) (0.09865,157620) (0.10855,193239) (0.10919,227111) (0.11489,280389) (0.11903,318856) (0.12888,393622) (0.13318,429934) (0.13843,430929) (0.14098,529574) (0.15037,604336) (0.16163,732354) (0.16456,738465) (0.18762,860867) (0.22167,1.02713e+06) (0.25764,1.18912e+06) (0.26932,1.19096e+06) (0.31269,1.32709e+06) (0.32991,1.32853e+06) (0.38885,1.46175e+06) (0.38894,1.46671e+06) (0.4966,1.5015e+06) (0.51005,1.50188e+06) };
          \addlegendentry{\textbf{Bip. MorphisHash-Flat} }
          \addplot[mark=star,color=colorMorphisHash,mark repeat*=4] coordinates { (0.03133,32634.6) (0.03307,44738.4) (0.03595,55523.3) (0.03655,56569.8) (0.03835,67518.2) (0.03835,71237.4) (0.04195,85422.8) (0.04313,85752.7) (0.04577,99457.9) (0.04603,123230) (0.0501,143741) (0.05519,146927) (0.0574,157697) (0.06254,161635) (0.06319,167978) (0.06856,179625) (0.06881,182790) (0.07197,183343) (0.07394,199917) (0.07914,209299) (0.07961,213270) (0.08509,218695) (0.08812,227812) (0.08813,236607) (0.0936,243769) (0.09733,244995) (0.09803,260805) (0.10422,275736) (0.10591,288264) (0.11246,308062) (0.12141,330516) (0.12165,331601) (0.12672,343020) (0.13616,366932) (0.13709,386559) (0.13954,730807) (0.14747,838771) (0.15105,1.03358e+06) (0.15631,1.17655e+06) (0.17977,1.32899e+06) (0.19196,1.43976e+06) };
          \addlegendentry{\textbf{Bip. MorphisHash-RS}}
          \addplot[mark=consensus,color=colorConsensus] coordinates { (0.011,29714.9) (0.011,59296) (0.011,170538) (0.011,325188) (0.01505,502748) (0.0508,1.54756e+06) (0.08762,2.62343e+06) (0.13553,3.72842e+06) (0.60185,6.67067e+06) };
          \addlegendentry{\consensus-RS \cite{lehmann2025combined}}

        \end{axis}
    \end{tikzpicture}
\end{figure}

\FloatBarrier
\end{document}